\renewenvironment{proof}{{\bfseries\noindent Proof.}}{\qed\vspace{3.5ex}}
\newtheorem*{lemma*}{Lemma}
\newtheorem*{theorem*}{Theorem}
\newtheorem{lemma}{Lemma}
\newtheorem{corollary*}{Corollary}
\newtheorem{corollary}{Corollary}
\newtheorem{theorem}{Theorem}
\newtheorem{definition}{Definition}
\newcommand{\tnc}[1]{}
\newcommand{\tne}[1]{\ensuremath{\langle #1\rangle}}
\newcommand{\PCPfour}{\mathcal{P}}% 4 pair post correspondence problem instant
\newcommand{\CTS}{\mathcal{C}}% 2-symbol scoring tag system
\newcommand{\CTSp}{\mathcal{C'}}% 2-symbol scoring tag system
\newcommand{\TSTC}{\mathcal{T_{\CTS}}}% tag system that simulating cyclic tag system
\newcommand{\TSTCp}{\mathcal{T'_{\CTS}}}% tag system that simulating cyclic tag system for our PCP
\def\Nset{\mathbb{N}}% naturals
\def\Zset{\mathbb{Z}}% ints
\newcommand{\encodeOne}{\tne{1}}
\newcommand{\encodeOneSequence}{\ensuremath{b^{10}(ubb)^{\frac{x}{2}-7}u^{\frac{x}{2}+7}b^{2}ub^{x+2}}}
\newcommand{\encodeOneTrack}{\ensuremath{b^{10}(cbb)^{\frac{x}{2}-7}c^{\frac{x}{2}+7}b^{2}cb^{x+2}}}
\newcommand{\encodeOneTrackSpecialCaseA}{\ensuremath{b^9(cbb)^{\frac{x}{2}-7}c^{\frac{x}{2}+7}b^{2}cb^{x+2}}}
\newcommand{\encodeZero}{\tne{0}}
\newcommand{\encodeZeroSequence}{\ensuremath{b^4ub^2u^{x-1}b^2ub^{2x-8}}}
\newcommand{\encodeZeroTrack}{\ensuremath{b^4cb^2c^{x-1}b^2cb^{2x-8}}}
\newcommand{\encodeZeroTrackSpecialCaseA}{\ensuremath{b^3cb^2c^{x-1}b^2cb^{2x-8}}}
\newcommand{\encodeDeletion}{\tne{\epsilon}}
\newcommand{\encodeDeletionSequence}{\ensuremath{b^2ub^{3x-2}}}
\newcommand{\encodeDeletionTrack}{\ensuremath{b^2cb^{3x-2}}}
\newcommand{\encodeDeletionTrackCaseA}{\ensuremath{bcb^{3x-2}}}
\newcommand{\encodeDeletionTrackCaseB}{\ensuremath{b^2cb^{3x-3}}}
\newcommand{\encodeDeletionPrime}{\tne{\epsilon'}}
\newcommand{\encodeDeletionPrimeSequence}{\ensuremath{b^4ub^2u^{x-2}b^2ub^{2x-8}}}
\newcommand{\encodeDeletionPrimeTrack}{\ensuremath{b^4cb^2c^{x-2}b^2cb^{2x-8}}}
\newcommand{\track}[2]{\raisebox{2.5pt}{\ensuremath{\underset{[#1]}#2}}}
\begin{document}

\title{Undecidability in binary tag systems and the Post correspondence problem for four pairs of words}

\author{Turlough Neary\\
Institute of Neuroinformatics, University of Z\"urich and ETH Z\"urich, Switzerland\\
tneary@ini.phys.ethz.ch
}
\date{}
\maketitle

\begin{abstract}
Since Cocke and Minsky proved 2-tag systems universal, they have been extensively used to prove the universality of numerous computational models. Unfortunately, all known algorithms give universal 2-tag systems that have a large number of symbols. 
In this work, tag systems with only 2 symbols (the minimum possible) are proved universal via an intricate construction showing that they simulate cyclic tag systems. 
Our simulation algorithm has a polynomial time overhead, and thus shows that binary tag systems simulate Turing machines in polynomial time. 

We immediately find applications of our result. 
We reduce the halting problem for binary tag systems to the Post correspondence problem for 4 pairs of words.
This improves on 7 pairs, the previous bound for undecidability in this problem.
Following our result, only the case for 3 pairs of words remains open, as the problem is known to be decidable for 2 pairs. 
As a further application, we find that the matrix mortality problem is undecidable for sets with five $3\times 3$ matrices and for sets with two $15\times 15$ matrices. 
The previous bounds for the undecidability in this problem was seven $3\times 3$ matrices and two $ 21\times 21$ matrices.  
\end{abstract}

\section{Introduction}
Introduced by Post~\cite{Post1943}, tag systems have been used to prove Turing universality in numerous computational models, including some of the simplest known universal systems~\cite{Baiocchi2001,Cook2004,HarjuMargenstern2005,Kudlek2002A,Minsky1962,Hooper1969,Robinson1971,Robinson1991,Rogozhin1996,Rothemund1996,SiegelmannMargenstern1999}. Many universality results rely either on direct simulation of tag systems or on a chain of simulations the leads back to tag systems. Such relationships between models means that improvements in one model often has applications to many others. The results in~\cite{WoodsNeary2006B} are a case in point, where an exponential improvement in the time efficiency of tag systems had the domino effect of showing that many of the simplest known models of computation~\cite{Baiocchi2001,Cook2004,HarjuMargenstern2005,Hooper1969,Kudlek2002A,LindgrenNordahl1990,Minsky1962,Robinson1971,Robinson1991,Rogozhin1996,Rothemund1996,SiegelmannMargenstern1999} are in fact polynomial time simulators of Turing machines. Despite being central to the search for simple universal systems for 50 years, tag systems have not been the subject of simplification~since~the~early~sixties. 

In 1961, Minsky~\cite{Minsky1961} solved Post's longstanding open problem by showing that tag systems, with deletion number 6, are universal.
Soon after, Cocke and Minsky~\cite{Cocke1964} proved that tag systems with deletion number 2 (2-tag systems) are universal. Later, Hao Wang~\cite{Wang1963} showed that 2-tag systems with even shorter instructions were universal. 
The systems of both Wang, and Cocke and Minsky use large alphabets and so have a large number of rules.
Here we show that tag systems with only 2 symbols, and thus only 2 rules, are universal. Surprisingly, one of our two rules is trivial. We find immediate applications of our result. Using Cook's~\cite{Cook2004} reduction of tag systems to cyclic tag systems,  it is a straightforward matter to give a binary cyclic tag system program that is universal and contains only two 1 symbols. We also use our binary tag system construction to improve the bound for the number of pairs of words for which the Post correspondence problem~\cite{Post1946} is undecidable, and the bounds for the simplest sets of matrices for which the mortality problem~\cite{Paterson1970} is undecidable.

The search for the minimum number of word pairs for which the Post correspondence problem is undecidable began in the 1980s~\cite{Claus1980,Pansiot1981}. 
The best result until now was found by Matiyasevich and S\'{e}nizergues, whose impressive 3-rule semi-Thue system~\cite{Matiyasevich1996,Matiyasevich2005}, along with a reduction due to Claus~\cite{Claus1980}, showed that the problem is undecidable for 7 pairs of words. 
Improving on this undecidability bound of 7 pairs of words seemed like a challenging problem. In fact, Blondel and Tsitsiklis~\cite{Blondel2000} stated in their survey ``The decidability of the intermediate cases ($3\leqslant n\leqslant 6$) is unknown but is likely to be difficult to settle''. We give the first improvement on the bound of Matiyasevich and S\'{e}nizergues in 17 years: We reduce the halting problem for our binary tag system to the Post correspondence problem for 4 pairs of words. This leaves open only the case for 3 pairs of words, as the problem is known to be decidable for 2 pairs~\cite{Ehrenfeucht1982,Halava2002A}.

A number of authors~\cite{Blondel1997,Cassaigne1998,Halava2001,Halava2007,Paterson1970}, have used undecidability bounds for the Post correspondence problem to find simple matrix sets for which the mortality problem is undecidable. 
The matrix mortality problem is, given a set of $d\times d$ integer matrices, decide if the zero matrix can be expressed as a product of matrices from the set. Halava et al.~\cite{Halava2007} proved the mortality problem undecidable for sets with seven $3\times 3$ matrices, and using a reduction due Cassaigne and Karhum\"{a}ki~\cite{Cassaigne1998} they also showed the problem undecidable for sets with two $21\times 21$ matrices. 
Using our new bound, and applying the reductions used in~\cite{Cassaigne1998,Halava2001}, we find that the matrix mortality problem is undecidable for sets with five $3\times 3$ matrices and for sets with two $15\times 15$ matrices. In addition, by applying reductions due to Halava and Hirvensalo~\cite{Halava2007A}, we improve on previous undecidability bounds for a number of decision problems in sets that consist of two matrices. These new bounds include a set with two $7\times 7$ matrices for which the scalar reachability problem is undecidable.

We complete our introduction by recalling some decidability results and open problems for tag systems.  
Stephen Cook~\cite{Cook1966} proved that the reachability problem, and hence the halting problem, is decidable for non-deterministic
1-tag systems. More recently, De Mol~\cite{DeMol2010} has shown that the reachability (and thus halting) problem is decidable for binary 2-tag systems, a problem which Post~\cite{Post1965} claimed to have solved but never published. In the 1920s Post~\cite{Post1965} gave a simple binary 3-tag system ($0\rightarrow 00$, $1\rightarrow 1101$) whose halting problem is still open~\cite{DeMol2011}. De Mol~\cite{DeMol2008} reduced the well know Collatz problem to the halting problem for a remarkably simple 2-tag system that has 3 rules. The simple tag systems of Post and De Mol suggest that improving on existing decidability results would be quite difficult.

\section{Preliminaries}
We write $c_1\vdash c_2$ if a configuration $c_2$ is obtained from $c_1$ via a single computation step. We let $c_1\vdash^{t}c_2$ denote a sequence of $t$ computation steps. The length of a word $w$ is denoted $|w|$, and $\epsilon$ denotes the empty word. We let $\langle v\rangle$ denote the encoding of $v$, where $v$ is a symbol or a word. We use the standard binary modulo operation $a=m\bmod n$, where $a=m-ny$, $0\leqslant a<n$, and $a,m,n,$ and $y$ are integers.

\subsection{Tag systems}\label{sect:Tag systems}
\begin{definition}\label{def:Tag System}
A tag system consists of a finite alphabet of symbols $\Sigma$, a finite set of rules $R : \Sigma\rightarrow\Sigma^{\ast}$ and a deletion number $\beta\in\Nset$, $\beta\geqslant 1$.
\end{definition}
The tag systems we consider are deterministic. 
The computation of a tag system acts on a word $w=w_0 w_1\ldots w_{|w|-1}$ (here $w_i\in\Sigma$) which we call the \emph{dataword}. The entire configuration is given by $w$. 
In a computation step, the symbols~$w_0 w_1\ldots w_{\beta-1}$ are deleted and we apply the rule for $w_0$, i.e. a rule of the form $w_0\rightarrow w_{0,1}w_{0,2}\ldots w_{0,e}$, by appending the word $w_{0,1}w_{0,2}\ldots w_{0,e}$ (here $w_{0,j}\in\Sigma$). A dataword (configuration) $w'$ is obtained from $w$ via a single computation step as follows:
\begin{equation*}
w_0 w_1\ldots w_{\beta}\ldots w_{|w|-1} \vdash w_{\beta}\ldots w_{|w|-1}w_{0,1}w_{0,2}\ldots w_{0,e}
\end{equation*}
where $w_0\rightarrow w_{0,1}w_{0,2}\ldots w_{0,e} \in R$. 
A tag system halts if~$|w|< \beta $. As an example we give the first 5 steps of Post's~\cite{Post1965} binary tag system with deletion number 3 and the rules $0\rightarrow 00$ and $1\rightarrow 1101$ on the input $0101110$.
\begin{equation*}
010 1110 \quad\vdash\quad 111 000 \quad\vdash\quad 000 1101 \quad\vdash\quad 110 100  \quad\vdash\quad 100 1101 \quad\vdash\quad 11011101 \quad\vdash\quad \cdots  
\end{equation*}
We use the term \emph{round} to describe the $\lfloor \frac{|w|}{\beta}\rfloor$ or $\lceil \frac{|w|}{\beta}\rceil$ computation steps that traverse the word $w$ exactly once.
We say a symbol $w_0$ is \emph{read} if and only if at the start of a computation step it is the leftmost symbol (i.e. the rule $w_0\rightarrow w_{0,0}w_{0,1}\ldots w_{0,c}$ is applied), and we say a word $w=w_0 w_1\ldots w_{|w|-1}$ is \emph{entered with shift $z<\beta$} if $w_z$ is the leftmost symbol that is read in $w$. 
For example, in Figure~\ref{fig:Example shift change} the words $q$, $r$, and $v$ are entered with shifts of 0, 3, and 2 respectively.
We let $\track{z}{w}$ denote the sequence of symbols that is read during a single round on $w$ when it is entered with shift $z$, and we call $\track{z}{w}$ a track of $w$. 
If $w=w_0w_1\ldots w_{|w|-1}$, then $\track{z}{w}=w_{z}w_{z+\beta}w_{z+2\beta}w_{z+3\beta},\ldots,w_{z+l\beta}$ where $|w|-\beta\leqslant z+l\beta< |w|$. 
For example, in Figure~\ref{fig:Example shift change} we have $\track{0}{w}=q_0q_4r_3r_7v_2$. 
A word $w$ has a \emph{shift change} of $0\leqslant s<\beta$ if $|w|=y\beta-s$ where $y>0$ is a natural number. 
 
\begin{figure}
\setlength{\unitlength}{.6cm}
\begin{picture}(10,2.1)
\put(.8,1){$q_0q_1q_2q_3q_4\;r_0r_1r_2r_3r_4r_5r_6r_7r_8\;v_0v_1v_2v_3v_4v_5
\qquad\vdash\qquad 
q_4\;r_0r_1r_2r_3r_4r_5r_6r_7r_8\;v_0v_1v_2v_3v_4v_5
\;\;\quad\vdash\qquad$} 
\put(.8,0){$r_3r_4r_5r_6r_7r_8\;v_0v_1v_2v_3v_4v_5
\qquad \vdash\qquad 
r_7r_8\;v_0v_1v_2v_3v_4v_5
\qquad \vdash\qquad 
v_2v_3v_4v_5$}
\end{picture}
\caption{Four computation steps of a tag system with deletion number $\beta=4$ on the word $w=qrv$. Here $q=q_0q_1q_2q_3q_4$, $r=r_0r_1r_2r_3r_4r_5r_6r_7r_8$, $v=v_0v_1v_2v_3v_4v_5$, and $q_i$, $r_i$ and $v_i$ are tag system symbols, and for simplicity we assume that all symbols append the empty word (i.e.~all rules have the form $w_i\rightarrow \epsilon$).}\label{fig:Example shift change}
\end{figure}

\begin{lemma}\label{lem:shiftChange}
Given a tag system $T$ with deletion number $\beta$ and the word $rv\in\Sigma^{\ast}$, where the word $r$ has a shift change of $s$ and $|v|\geqslant\beta$, after one round of $T$ on $r$ entered with shift $z$ the word $v$ is entered with shift $(z+s)\bmod\beta$.
\end{lemma}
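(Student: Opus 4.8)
The plan is to follow the single round on $r$ step by step and to pin down exactly which symbol becomes leftmost once $r$ has been traversed, then express its position relative to the start of $v$.

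First I would locate the last symbol of $r$ that is read during the round. Since $r$ is entered with shift $z$, the definition of a track gives $\track{z}{r}=r_z r_{z+\beta}r_{z+2\beta}\cdots r_{z+l\beta}$, where $l$ is the unique integer satisfying $|r|-\beta\leqslant z+l\beta<|r|$. Hence the final read of the round occurs at index $z+l\beta$ of $rv$. Next I would examine the last computation step of the round: at its start the leftmost symbol is $r_{z+l\beta}$, and the step deletes the $\beta$ consecutive symbols at indices $z+l\beta,\,z+l\beta+1,\,\dots,\,z+(l+1)\beta-1$. Because $z+l\beta<|r|\leqslant z+(l+1)\beta$, this block consumes the remainder of $r$ and then spills over into $v$; the hypothesis $|v|\geqslant\beta$ guarantees there are enough symbols in $v$ that the new leftmost symbol still lies inside $v$. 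That surviving leftmost symbol, i.e.\ the symbol with which $v$ is entered, therefore sits at index $j:=z+(l+1)\beta-|r|$ of $v$.

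It then remains to check that $j$ is a legitimate shift and to evaluate it modulo $\beta$. From $z+l\beta<|r|$ we obtain $j<\beta$, and from $|r|\leqslant z+(l+1)\beta$ we obtain $j\geqslant 0$, so $0\leqslant j<\beta$. Substituting the shift change $|r|=y\beta-s$ yields $j=(z+s)+(l+1-y)\beta\equiv z+s\pmod{\beta}$, and since $j\in[0,\beta)$ this forces $j=(z+s)\bmod\beta$, which is precisely the claimed entry shift for $v$.

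No single step here is deep; the argument is careful bookkeeping of indices. The one place demanding attention (the mild obstacle) is the boundary behaviour of that final deletion step: one must confirm it really does reach past the end of $r$ so the surviving leftmost symbol genuinely lies in $v$, using $|v|\geqslant\beta$ for room, and handle the edge case $z+(l+1)\beta=|r|$, where the round terminates exactly at the $r$--$v$ boundary and $v$ is entered with shift $0$.
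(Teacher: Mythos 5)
Your proof is correct and follows essentially the same route as the paper's: identify the last symbol $r_{z+l\beta}$ read in $r$, observe that the final deletion of $\beta$ symbols spills over into $v$, and compute the index of the surviving leftmost symbol of $v$ modulo $\beta$ using $|r|=y\beta-s$. Your explicit verification that $0\leqslant j<\beta$ and the remark on the boundary case $z+(l+1)\beta=|r|$ are slightly more careful than the paper's presentation but do not change the argument.
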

Before we give the proof of Lemma~\ref{lem:shiftChange} we note that Figure~\ref{fig:Example shift change} gives examples of the shift change caused by reading a word: In Figure~\ref{fig:Example shift change} the word $q$ is entered with shift $0$ and has shift change of $s=3$ and so $r$ is entered with shift $3=(0+3)\bmod 4$, and the word $r$ has a shift change of $3$ so the word $v$ is entered with shift $2=(3+3)\bmod 4$.\\

\begin{proof}
Recall that when $r$ is entered with shift $z$ then $r_z$ is the leftmost symbol read in $r$. In Equation~\eqref{eq: shift change} the sequence of symbols read in $r$ and the leftmost symbol read in $v$ are given in bold. The rightmost symbol read in $r$ is $r_{z+l\beta}$ for some $l\in\Nset$ and the next symbol read is $v_{q}$ in $v$. It follows that symbols $r_{z+l\beta}$ to $v_{q-1}$ are deleted in the computation step when $r_{l\beta+z}$ is read. Since $\beta$ symbols are deleted at each computation step, from Equation~\eqref{eq: shift change} we get $y\beta-s-l\beta-z+q=\beta$ which we rewrite as $q=\beta(l-y+1)+s+z$. The shift value $q$ must be $<\beta$ which gives $q=(s+z)\bmod\beta$.
\begin{equation}\label{eq: shift change}
r_0\dots r_{z-1}\pmb{r_z}r_{z+1}\ldots r_{z+\beta-1}\pmb{r_{z+\beta}}r_{z+\beta+1}\,\ldots\, r_{z+l\beta-1}\pmb{r_{z+l\beta}}r_{z+l\beta+1}\,\ldots\,r_{y\beta-s-1}v_{0}\,\ldots\, v_{q-1}\pmb{v_{q}} v_{q+1}\ldots
\end{equation}
\end{proof}

\begin{lemma}\label{lem:read upper or lower bound number of symbols}
Given a tag system $T$ with deletion number $\beta$ and the word $w$, where $|w|=y\beta-s$ with $y\in\Nset$ and $0<s<\beta$, one round of $T$ on $w$ entered with shift $z<\beta$ reads $\lceil\frac{|w|}{\beta}\rceil=y$ symbols if $z<\beta-s$, and $\lfloor\frac{|w|}{\beta}\rfloor=y-1$ symbols if $z\geqslant\beta-s$.
\end{lemma}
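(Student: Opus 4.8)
The plan is to directly count the symbols read during one round and match the count against the two claimed values. Recall from the definition of a track that, when $w=w_0w_1\ldots w_{|w|-1}$ is entered with shift $z$, the symbols read are exactly $w_{z},w_{z+\beta},w_{z+2\beta},\ldots$, i.e.\ $w_{z+i\beta}$ for those $i\in\Nset$ with $z+i\beta<|w|$. Hence the number of symbols read in one round equals the number of nonnegative integers $i$ satisfying $z+i\beta<|w|$, equivalently $i<\frac{|w|-z}{\beta}$. Since $z<\beta\leqslant|w|$ guarantees $\frac{|w|-z}{\beta}>0$, this count is $\lceil\frac{|w|-z}{\beta}\rceil$ (the borderline case of an integer argument being covered by the strict inequality).

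First I would substitute $|w|=y\beta-s$ and simplify, obtaining $\lceil\frac{y\beta-s-z}{\beta}\rceil=\lceil y-\frac{s+z}{\beta}\rceil=y-\lfloor\frac{s+z}{\beta}\rfloor$, using $\lceil n-x\rceil=n-\lfloor x\rfloor$ for integer $n$. Thus the whole statement reduces to evaluating the single floor $\lfloor\frac{s+z}{\beta}\rfloor$. Since $0<s<\beta$ and $0\leqslant z<\beta$, the sum satisfies $0<s+z<2\beta$, so this floor is either $0$ or $1$: it is $0$ precisely when $s+z<\beta$, i.e.\ $z<\beta-s$, giving a count of $y$; and it is $1$ when $s+z\geqslant\beta$, i.e.\ $z\geqslant\beta-s$, giving a count of $y-1$. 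These match $\lceil\frac{|w|}{\beta}\rceil$ and $\lfloor\frac{|w|}{\beta}\rfloor$, since $|w|=y\beta-s$ with $0<s<\beta$ forces $y-1<\frac{|w|}{\beta}<y$, whence $\lceil\frac{|w|}{\beta}\rceil=y$ and $\lfloor\frac{|w|}{\beta}\rfloor=y-1$.

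Alternatively, and more in keeping with the notation already fixed in the paper, I could work from the bound $|w|-\beta\leqslant z+l\beta<|w|$ characterising the index $z+l\beta$ of the last symbol read, so that $l+1$ symbols are read. Substituting $|w|=y\beta-s$ and isolating $l$ yields $(y-1)-\frac{s+z}{\beta}\leqslant l<y-\frac{s+z}{\beta}$, which pins down $l$ as the unique integer in a half-open unit interval and gives $l+1=y-\lfloor\frac{s+z}{\beta}\rfloor$, the same expression as before.

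The only delicate point is the boundary case $z=\beta-s$, where $\frac{s+z}{\beta}$ equals the integer $1$; here one must check that the strict inequality $z+i\beta<|w|$ excludes the borderline index rather than including it. This is exactly what places $z=\beta-s$ in the ``$z\geqslant\beta-s$'' branch and produces the value $y-1$ there. I expect this off-by-one check at the divisibility boundary to be essentially the only obstacle, the remainder being a routine computation.
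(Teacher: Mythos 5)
Your proof is correct and takes essentially the same approach as the paper's: counting how many indices of the form $z+i\beta$ fall strictly below $|w|=y\beta-s$. The paper merely exhibits the boundary shift $\beta-s-1$ in a displayed equation and asserts the general case is ``fairly straightforward to see,'' whereas you carry out the full ceiling/floor computation $y-\lfloor\frac{s+z}{\beta}\rfloor$ explicitly, so your version is if anything more complete.
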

\begin{proof}
In Equation~\eqref{eq: enter w with shift beta-s-1} the sequence of symbols read when $w$ is entered with shift $\beta-s-1$ is given in bold. This bold sequence has length $y$.
It is fairly straightforward to see that if we enter $w$ with a shift $<\beta-s$ we read $y$ symbols and if we enter $w$ with a shift $\geqslant \beta-s$ we read $y-1$ symbols.
\begin{equation}\label{eq: enter w with shift beta-s-1}
w_0\,w_1\;\ldots\; w_{\beta-s-2}\,\pmb{w_{\beta-s-1}}\,w_{\beta-s}\;\ldots\; w_{2\beta-s-2}\,\pmb{w_{2\beta-s-1}}\,w_{2\beta+s}\;\ldots\; w_{y\beta-s-2}\,\pmb{w_{y\beta-s-1}}
\end{equation}
\end{proof}

\subsection{Cyclic tag systems}\label{sec:Cyclic tag systems}
\begin{definition}\label{def:Cyclic tag system}
A cyclic tag system $\CTS=\alpha_0,\ldots,\alpha_{p-1},$ is a list 
of words $\alpha\in\{0,1\}^{\ast}$ called appendants.
\end{definition}
A {\em configuration} of a cyclic tag system consists of (i) a {\em marker} that points to a single appendant~$\alpha_{m}$ in~$\CTS$, and (ii) a word $w=w_1\ldots w_{|w|}\in\{0,1\}^*$. We call~$w$ the {\em dataword}. Intuitively the list $\CTS$ is a program with the marker pointing to instruction~$\alpha_{m}$. In the initial configuration the marker points to appendant~$\alpha_0$ and~$w$ is the binary input word.
\begin{definition}\label{def:Computation CTS}
A computation step is deterministic and acts on a configuration in one of 
two ways:
\begin{itemize}
\item If $w_1 = 0$ then $w_1$ is deleted and the marker moves to appendant 
$\alpha_{(m+1\bmod p)}$.
\item If $w_1 = 1$ then $w_1$ is deleted, the word $\alpha_{m}$ is 
appended onto the right end of~$w$,  and the marker moves to appendant 
$\alpha_{(m+1\bmod p)}$.
\end{itemize}
\end{definition}
A cyclic tag system completes its computation if (i) the dataword is the empty word or (ii) it enters a repeating sequence of configurations.

As an example we give first 6 steps of the cyclic tag system $\CTS=001,01,11$ on the input word $101$. In each configuration $\CTS$ is given on the left with the marked  appendant highlighted in bold~font.
\begin{eqnarray*}
\begin{split}
& \pmb{001},01,11\quad 101  & \;\vdash\;\quad  & 001,\pmb{01},11\quad 01001 &\;\vdash\;\quad  &001,01,\pmb{11}\quad 1001 & \;\vdash\;\quad  & \pmb{001},01,11\quad 00111\\ 
 \;\vdash\;\quad  & 001,\pmb{01},11\quad 0111 & \;\vdash\;\quad  & 001,01,\pmb{11}\quad 111 & \;\vdash\;\quad  & \pmb{001},01,11\quad 1111 & \;\vdash\;\quad &\cdots
\end{split}
\end{eqnarray*}
Cyclic tag systems were introduced by Cook~\cite{Cook2004} and used to prove the cellular automaton Rule~110 universal. We gave an exponential improvement in the time efficiency of cyclic tag~systems~to~show:
\begin{theorem}[\cite{NearyWoods2006C}]\label{thm:CTS simulate TMs}
Let $M$ be a single-tape deterministic Turing machine that computes in time~$t$. Then there is a cyclic tag system $\CTS$ that simulates the computation of M in time $O(t^3 \log t)$. 
\end{theorem}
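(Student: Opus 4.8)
The plan is to prove the theorem by composing two reductions: an efficient simulation of $M$ by a tag system, followed by Cook's reduction from tag systems to cyclic tag systems. For the first stage I would build a $2$-tag system $T$ (deletion number $\beta=2$) that simulates $M$. The real obstacle lies here, in avoiding the exponential slowdown of the classical Cocke--Minsky construction, which stores the tape of $M$ as a single integer and realises each head operation by multiplying and dividing that integer at a cost exponential in the tape length. Instead I would use a positional binary encoding: the state, head position, and length-$O(t)$ tape contents of $M$ are written as one dataword of $T$, laid out so that a single \emph{round} of $T$ copies the encoded tape to the back of the queue while applying, near the head, the local rewrite prescribed by the transition rule of $M$. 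Because $T$ operates as a queue (symbols deleted at the front, appended at the back), advancing the head by one cell is realised by the rotation that a round already performs, so each step of $M$ is simulated in polynomially many tag steps rather than exponentially many.

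Keeping the block structure of the encoding aligned from one round to the next is exactly what Lemmas~\ref{lem:shiftChange} and~\ref{lem:read upper or lower bound number of symbols} are for. Each encoded block has a prescribed shift change, and I would choose block lengths so that the cumulative shift change over a round returns the dataword to a controlled phase: Lemma~\ref{lem:shiftChange} then fixes the shift with which the next block (hence the next round) is entered, while Lemma~\ref{lem:read upper or lower bound number of symbols} determines precisely how many symbols of each block are read. Checking that these shifts remain consistent across every case of the transition rule is the delicate heart of the construction. The outcome of this stage is a fixed $2$-tag system $T$, whose alphabet size $m$ depends only on $M$ and not on $t$, reproducing the computation of $M$; a careful count bounds its running time by $O(t^3\log t)$ tag steps.

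For the second stage I would apply Cook's reduction to turn $T$ into a cyclic tag system $\CTS$. Writing the alphabet of $T$ as $\{\sigma_1,\dots,\sigma_m\}$, encode each symbol as the length-$m$ string $\langle\sigma_i\rangle = 0^{i-1}10^{m-i}$, encode datawords symbol-by-symbol, and give $\CTS$ period $p=2m$: its first $m$ appendants are the encodings of the productions of $T$, and its last $m$ appendants are empty, the latter absorbing the second of the two symbols deleted by $T$ at each step. A direct check shows that one step of $T$ corresponds to a single pass of the marker through $\CTS$: as the marker scans the length-$m$ encoding of the symbol whose rule fires it meets the unique $1$ exactly when it points at the appendant holding that symbol's production, so the correct word is appended, while the companion block is read against the empty appendants. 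Hence each step of $T$ costs $2m=O(1)$ steps of $\CTS$, and the halting of $M$ surfaces as the empty dataword of $\CTS$. Composing this constant-factor overhead with the $O(t^3\log t)$ bound on $T$ gives the claimed $O(t^3\log t)$ bound for $\CTS$, and I expect essentially all the difficulty to lie in the first stage, namely in designing the encoding and round structure of $T$ and discharging, via Lemmas~\ref{lem:shiftChange} and~\ref{lem:read upper or lower bound number of symbols}, the shift-consistency conditions that keep successive rounds correctly phased.
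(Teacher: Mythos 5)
First, a point of reference: the paper does not prove Theorem~\ref{thm:CTS simulate TMs} at all --- it is imported from~\cite{NearyWoods2006C}, where the construction runs from the Turing machine through an intermediate \emph{clockwise} Turing machine to the cyclic tag system, not through a $2$-tag system. Your decomposition (Turing machine $\to$ $2$-tag system $\to$ cyclic tag system) therefore differs from the source, though it is in principle a legitimate alternative route: a polynomial-time $2$-tag simulation of Turing machines is the content of~\cite{WoodsNeary2006B}, and your account of Cook's reduction (encode $\sigma_i$ as $0^{i-1}10^{m-i}$, period $2m$ with $m$ empty appendants absorbing the second deleted symbol, hence constant overhead per tag step) is correct.

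The genuine gap is in your first stage, which you rightly identify as carrying all the difficulty but do not actually carry out. Two specific obstacles are elided. (a) A $2$-tag rule appends a word determined solely by the identity of the one symbol read; it cannot ``apply, near the head, the local rewrite prescribed by the transition rule of $M$'' during a round that otherwise copies, because that rewrite depends on the current state and on adjacent tape symbols, none of which is visible to the rule fired by the symbol being read. Making this work requires state-indexed alphabets and several phases (rounds) per simulated step, or an intermediate one-way machine model --- which is exactly why the cited proof routes through clockwise Turing machines. (b) The queue rotation realises head motion in one direction only; simulating a move in the opposite direction is the hard case and is not addressed. Finally, the bound $O(t^3\log t)$ is asserted (``a careful count bounds its running time\dots'') rather than derived, and since the whole content of the theorem is this quantitative bound --- a merely computable simulation is classical --- the proposal as written does not establish the statement.
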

Given a cyclic tag system $\CTS=\alpha_0,\ldots,\alpha_{p-1}$ we can construct another cyclic tag system $\CTS'$ by concatenating an arbitrary number of copies of the program for $\CTS$. The system $\CTS'$ simulates step for step the computation of $\CTS$. For example take $\CTS=001,01,11$ given above, if we define $\CTS'=001,01,11,001,01,11$ then $\CTS'$ will give the same sequence of computation steps for any computation of $\CTS$.

\section{Simulating cyclic tag systems with binary tag systems}\label{sec:simulating CTS with TS} 

In Theorem~\ref{thm:main theorem}, the tag system $\TSTC$ simulates an arbitrary cyclic tag system with a program of length $3k+2$ where $k\in\Nset$. 
We do not lose generality with this restriction since the cyclic tag systems given by the construction in~\cite{NearyWoods2006C} satisfy this condition and are Turing universal.

\begin{theorem}\label{thm:main theorem}
Let $\CTS=\alpha_0,\alpha_1,\ldots\alpha_{3k+1}$ with $k\in\Nset$ be a cyclic tag system that runs in time~$t$. Then there is a binary tag system $\TSTC$ that simulates the computation of $\CTS$ in time $O(t^2)$.
\end{theorem}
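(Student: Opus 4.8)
The plan is to build the binary tag system $\TSTC$ so that its dataword is a concatenation of encoded blocks, one block per symbol of the cyclic tag system's dataword, with the encoding of each CTS symbol being a long word over $\{b,u\}$ (the two tag system symbols) whose structure carries both the bit value and enough ``padding'' to control the shift. The macros in the preamble (\encodeOneSequence, \encodeZeroSequence, \encodeDeletionSequence) strongly suggest exactly this: a $1$, a $0$, and a ``deletion'' marker each map to a word of the same total length $3x$ (one checks $|$\encodeOneSequence$|=|$\encodeZeroSequence$|=|$\encodeDeletionSequence$|=3x$ for the appropriate parameter $x$), so that every encoded symbol occupies three ``columns'' of width $x$. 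The deletion number $\beta$ will be tied to $x$, and the whole design will be arranged so that one round of $\TSTC$ over one encoded block simulates the reading and processing of one CTS symbol against one appendant.

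\textbf{Key steps.} First I would fix the encoding: choose $\beta$ (likely $\beta = 3x$ or a small multiple), define $\langle 1\rangle$, $\langle 0\rangle$, $\langle\epsilon\rangle$ as the macro words above, and define the rules $R$ so that the trivial rule $u\to\epsilon$ (the paper promises one of the two rules is trivial) simply deletes, while the rule for $b$ appends the controlled padding that implements either ``append $\langle\alpha_m\rangle$'' or ``skip.'' Second, I would use Lemma~\ref{lem:shiftChange} as the central bookkeeping device: the marker position $m$ in $\CTS$ is encoded by the \emph{shift} with which $\TSTC$ enters the current block. Because each of $\langle 0\rangle,\langle 1\rangle,\langle\epsilon\rangle$ has a fixed length, each has a fixed shift change $s$ modulo $\beta$, and Lemma~\ref{lem:shiftChange} then tells me exactly the shift with which the next block is entered — this is how the cyclic advance $m\mapsto (m+1)\bmod p$ of the appendant marker gets simulated. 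Third, I would use Lemma~\ref{lem:read upper or lower bound number of symbols} to verify that, depending on whether the entering shift $z$ is below or at least $\beta-s$, the round reads either $y$ or $y-1$ symbols; this off-by-one in the number of symbols read is precisely what lets the same block be interpreted differently according to the current marker, so that the appropriate appendant $\alpha_m$ (and only that one) gets appended when the read bit is $1$. Fourth, I would assemble these local simulations into a global invariant: after the round processing the block for CTS symbol $w_1$, the dataword of $\TSTC$ encodes the CTS dataword with $w_1$ deleted and $\alpha_m$ appended (when $w_1=1$), and is entered with the shift encoding the new marker. The special-case macros (\encodeOneTrackSpecialCaseA, etc.) indicate that boundary rounds, where a shift change crosses a block boundary, need separate handling, so I would treat those as explicit base/edge cases. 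Finally, the $O(t^2)$ time bound follows by showing each CTS step costs $O(|w|)$ tag steps (one round over a dataword of length $O(|w|)$ blocks) and summing over the $t$ steps, with $|w|=O(t)$.

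\textbf{The main obstacle} will be the shift accounting: making the three encoded lengths, the deletion number $\beta$, and the program length $3k+2$ all cohere so that (i) the shift changes accumulate to reproduce the cyclic marker motion exactly modulo the period, and (ii) the two cases of Lemma~\ref{lem:read upper or lower bound number of symbols} land on the correct columns of the encoding to read off the intended bit and append the intended appendant. The restriction to programs of length $3k+2$ is surely chosen to make this modular arithmetic close up cleanly, and I expect the parameter $x$ (and the $-7$, $+7$, $2x-8$ offsets visible in the macros) to be reverse-engineered from these congruence constraints. The secondary difficulty is verifying that the $b$-rule, which must do the ``real'' work of appending the next encoded block, can be driven purely by which column of the block is currently being read — that is, that the single nontrivial rule, applied across a round, produces context-sensitive behavior solely through the shift. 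I would handle this by a careful case analysis over the read track $\track{z}{\langle\cdot\rangle}$ for each symbol and each relevant shift, reducing everything to the two lemmas already proved.
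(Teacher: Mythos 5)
You have correctly identified the paper's central device---the marked appendant $\alpha_m$ is encoded in the \emph{shift} with which $\TSTC$ enters the current object, with Lemma~\ref{lem:shiftChange} doing the bookkeeping for the cyclic advance of the marker---but the proposal has a genuine gap at the heart of the construction: the mechanism that actually appends $\tne{\alpha_m}$. In the paper the two rules are $b\rightarrow b$ (the trivial one) and $c\rightarrow u$, where $u$ is a single fixed word encoding the \emph{entire} program of $\CTSp$; there is no rule appending $\epsilon$, which is precisely why the garbage objects $\encodeDeletion$ and $\encodeDeletionPrime$ must exist. The objects $\encodeOne$, $\encodeZero$, $\encodeDeletion$, $\encodeDeletionPrime$ are not short words over two symbols of length about $3x$: each contains $\Theta(x)$ copies of the long word $u$ as subwords, and when an object is entered with shift $z$ its $b$-padding steers each embedded $u$ to be entered with a prescribed shift, so that the track $\track{s}{u}$ read there appends exactly one object of $\tne{\alpha_m}$. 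The technical weight of the theorem is in defining \emph{one} word $u$ whose $\beta$ tracks simultaneously realize all of these demands without assigning two values to the same track (Tables~\ref{tab:Tracks in u for Case 1}--\ref{tab:Tracks in u for Case 2b}, Lemmas~\ref{lem:Correctness of u} and~\ref{lem: ui words with unique shift values}). Your plan never confronts this consistency problem, and the sentence ``the rule for $b$ appends the controlled padding'' inverts which rule does the work.

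Two further points would cause your sketch to fail as stated. First, the premise that $\tne{1}$, $\tne{0}$ and $\tne{\epsilon}$ all have the same length breaks the marker bookkeeping: if the garbage object had the same shift change $z_1$ as $\tne{0}$ and $\tne{1}$, then reading garbage would advance the simulated marker. The paper needs $\encodeDeletion$ to have shift change $0$ and $\encodeDeletionPrime$ shift change $z_2$, together with the number-theoretic fact (Lemma~\ref{lem:Encode appendant}) that adding multiples of $z_2$ never changes the encoded index $m$. Second, you assign Lemma~\ref{lem:read upper or lower bound number of symbols} the wrong role: the $y$ versus $y-1$ dichotomy is not how the correct $\alpha_m$ is selected (the shift alone does that); it is a complication, absorbed by substituting one $\encodeDeletionPrime$ for an $\encodeDeletion$ and by the special-case tracks. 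Your time accounting (``one round over the whole dataword per CTS step'') also mismatches the simulation, which reads only $\tne{w_1}\{\encodeDeletion,\encodeDeletionPrime\}^\ast$ per simulated step; this happens to still give $O(t)$ per step and $O(t^2)$ overall, but for a different reason, namely the $O(t)$ garbage objects accumulated between consecutive encoded symbols.
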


\subsubsection{Cyclic tag system $\CTSp$ and binary tag system $\TSTC$}\label{sec:Cyclic tag system C'}
Given the program $\CTS=\alpha_0,\alpha_1,\ldots\alpha_{3k+1}$, we can give a cyclic tag system $\CTSp=(\alpha_0,\alpha_1,\ldots\alpha_{3k+1})^{q}$ of length $q(3k+2)$ that simulates $\CTS$ step for step when given the same input dataword as $\CTS$ (see the last paragraph of Section~\ref{sec:Cyclic tag systems}). 
The value $q\in\Nset$ is chosen so that $q(3k+2)=3x-2$ for $x\in\Nset$, such that $0=x\bmod2$ and $r<\frac{x}{2}-7$, where $r$ is the length of the longest appendant in $\CTS$. 

We construct a binary tag system $\TSTC$ that simulates the computation of $\CTSp$. The deletion number of $\TSTC$ is $\beta$, its alphabet is $\{b,c\}$, and its rules are of the form $b\rightarrow b$ and $c\rightarrow u$, where $u\in\{b,c\}^\ast$. The binary word $u$ encodes the entire program of $\CTSp$ and is defined by Tables~\ref{tab:Tracks in u for Case 1} to~\ref{tab:Tracks in u for Case 2b}. We will explain how to read these tables later in Section~\ref{sec:simulating CTS with TS}.

\subsubsection{Encoding used by $\TSTC$}\label{sec:Encoding used by tag system}
The cyclic tag system symbols 0 and 1 are encoded as the binary words $\encodeZero=\encodeZeroSequence$ and $\encodeOne=\encodeOneSequence$ respectively. We refer to $\encodeZero$ and $\encodeOne$ as objects.

\begin{definition}\label{def:input encoding}
An arbitrary input dataword $w_1w_2\ldots w_n\in\{0,1\}^\ast$ to a cyclic tag system is encoded as the $\TSTC$ input dataword $\tne{w_1}\tne{w_2}\ldots \tne{w_n}$.
\end{definition}
During the simulation we make use of two extra objects: the binary words $\encodeDeletion=\encodeDeletionSequence$ and $\encodeDeletionPrime=\encodeDeletionPrimeSequence$. An arbitrary (not necessarily input) cyclic tag system dataword $w_1w_2\ldots w_l\in\{0,1\}^\ast$ is encoded~as
\begin{equation}\label{eq:encoding CTS configuration}
\tne{w_1,z}\{\encodeDeletion,\encodeDeletionPrime\}^\ast\tne{w_2}\{\encodeDeletion,\encodeDeletionPrime\}^\ast\tne{w_3}\ldots \{\encodeDeletion,\encodeDeletionPrime\}^\ast\tne{w_l}\{\encodeDeletion,\encodeDeletionPrime\}^\ast 
\end{equation}
where $\tne{w_1,z}$ denotes the word given by an object $\tne{w_1}\in\{ \encodeZero , \encodeOne \}$ with its leftmost $z<\beta$ symbols deleted. This implies that $\tne{w_1}$ is entered with the shift value $z$ from Table~\ref{tab:equalities}. Finally, each appendant $\alpha_m=\sigma_1\sigma_2\ldots\sigma_v$ of $\CTSp$ is encoded via Equations~\eqref{eq:alpha m},~\eqref{eq:alphaPrime m1} or~\eqref{eq:alphaPrime m2}, (where $\sigma_i\in\{0,1\}$).\\
\begin{minipage}{.4\textwidth}
 \begin{equation}
  \tne{\alpha_m}=\tne{\sigma_1}\tne{\sigma_2}\ldots\tne{\sigma_v}\encodeDeletion^{x-v+1}\;\label{eq:alpha m}
 \end{equation}
\end{minipage}
\begin{minipage}{.59\textwidth}
\begin{align}
&\;\;\;\tne{\alpha'_m}= \tne{\sigma_1}\tne{\sigma_2}\ldots\tne{\sigma_j}\encodeDeletionPrime\tne{\sigma_{j+1}}\ldots\tne{\sigma_v}\encodeDeletion^{x-v} \label{eq:alphaPrime m1} \\
&\;\;\;\tne{\alpha'_m}=\tne{\sigma_1}\tne{\sigma_2}\ldots\tne{\sigma_v}\encodeDeletion^{j-v}\encodeDeletionPrime\encodeDeletion^{x-j} \label{eq:alphaPrime m2}
\end{align}
\end{minipage}

\begin{table}
\centering
\begin{tabular}{|@{\:}l@{}|}
\hline
$|\encodeDeletion|=(3x+1)\beta$,\quad\;\; $|\encodeDeletion|=|u|+3x$,\quad\;\; $|u|=(3x+1)\beta-3x$,\quad\;\,  $|\encodeOne|=|\encodeZero|=(x+1)|u|+2x$,
\\  
$|\encodeOne|=|\encodeZero|=(x+1)((3x+1)\beta-3x)+2x$,\qquad\;
$z_1=3x^2+x$,\qquad\; $z_1(3x-2)=\beta$,\quad\\ 
$|\encodeDeletionPrime|=x|u|+2x$,\qquad\;  
$|\encodeDeletionPrime|=x((3x+1)\beta-3x)+2x$,\qquad\; $z_2=3x^2-2x$,\\
$z=((z_1m+z_2d)\bmod\beta)$,\qquad\; $0\leqslant m< 3x-2$,\qquad\; $0\leqslant d< 3x+1$\\
\hline
\end{tabular}
\caption{Length of objects and shift change values. The shift change for $\encodeOne$ and $\encodeZero$ is $z_1$, the shift change for $\encodeDeletionPrime$ is $z_2$, the deletion number of $\TSTC$ is $\beta$, and $3x-2$ is the number of appendants in $\CTS'$. The value of $x$ is given in Section~\ref{sec:Cyclic tag system C'}.}\label{tab:equalities}
\end{table}

\subsubsection{Lengths of objects and shift values}\label{sec:length of objects}
The sequence of symbols that is read in a word is determined by the shift value with which it is entered (see for example Figure~\ref{fig:u track and encodedOne track} (i)).
So in the simulation we use the shift value for algorithm control flow. 
In Table~\ref{tab:equalities} we give the length of objects $\encodeZero$, $\encodeOne$, $\encodeDeletion$, and $\encodeDeletionPrime$, and their shift change values.
Recall from Section~\ref{sect:Tag systems} that an object of length $y\beta-s$ has a shift change of $s$, where $s<\beta$ and $\beta$ is the deletion number. 
So, from the object lengths $|\encodeOne|=|\encodeZero|$ and $|\encodeDeletionPrime|$ we get the respective shift change values of $z_1$ and $z_2$ in Table~\ref{tab:equalities}.
From Lemma~\ref{lem:shiftChange}, the shift an object is entered with is determined by the shift change of the objects previously read in the dataword. 
So when we have a dataword containing only $\encodeOne$, $\encodeZero$, $\encodeDeletion$ and $\encodeDeletionPrime$ objects, we enter objects with shifts of the form $z=((z_1m+z_2d)\bmod\beta)$ (see Table~\ref{tab:equalities}). 
The range of values for $m$ and $d$ in Table~\ref{tab:equalities} covers all possible shift values. 
To see this note that when $m=3x-2$, then  $z_1m=\beta$ giving $0=z_1m\bmod\beta$, and when $d=3x+1$, then $z_2d=\beta$ giving $0=z_2d\bmod\beta$.

\begin{figure}
\setlength{\unitlength}{.6cm}
\begin{picture}(10,4.1)
\put(0,3){(i)}
\put(1.8,3){$\tne{1,z}\tne{a_2}\ldots\tne{a_h}\qquad\vdash^{\lceil\frac{|\encodeOne|}{\beta}\rceil}\quad\;\; \tne{a_2,z+z_1}\tne{a_3}\ldots\tne{a_h}\tne{\alpha_m}$} 
\put(0,2){(ii)}
\put(1.8,2){$\tne{0,z}\tne{a_2}\ldots\tne{a_h}\qquad\vdash^{\lceil\frac{|\encodeZero|}{\beta}\rceil}\quad\;\; \tne{a_2,z+z_1}\tne{a_3}\ldots\tne{a_h}\encodeDeletion^{x+1}$} 
\put(0,1){(iii)}
\put(1.8,1){$\tne{\epsilon',z}\tne{a_2}\ldots\tne{a_h}\quad\;\;\;\vdash^{\lceil\frac{|\encodeDeletionPrime|}{\beta}\rceil}\;\;\;\;\; \tne{a_2,z+z_2}\tne{a_3}\ldots\tne{a_h}\encodeDeletion^{x}$} 
\put(0,0){(iv)}
\put(1.8,0){$\tne{\epsilon,z}\tne{a_2}\ldots\tne{a_h}\qquad\,\vdash^{\frac{|\encodeDeletion|}{\beta}}\quad\;\;\;\;\: \tne{a_2,z}\tne{a_3}\ldots\tne{a_h}\encodeDeletion$}
 \end{picture}
\caption{Objects $\encodeOne$, $\encodeZero$, $\encodeDeletion$ and $\encodeDeletionPrime$ being read by $\TSTC$ when entered with shift $z$, where $a_i\in\{\encodeDeletion,\encodeDeletionPrime,\encodeZero,\encodeOne\}$, and $z=(z_1m+z_2d)\bmod\beta$. In (i) and (ii) $z<\beta-z_1$, in (iii) $z<\beta-z_2$, and in (iv) $z<\beta$. The encoded appendant $\tne{\alpha_{m}}$ is given in Equation~\eqref{eq:alpha m}, and the values $z_1$, $z_2$, $m$, and $d$ are given in Table~\ref{tab:equalities}.}\label{fig:reading objects when number of symbols read is upperbound}
\end{figure}

\subsection{The simulation algorithm}\label{sec:Simulation algorithm}
Here we give a high level picture of our algorithm using Figures~\ref{fig:reading objects when number of symbols read is upperbound} and~\ref{fig:reading objects when number of symbols read is lowerbound}. Following this, in Sections~\ref{sec:Encoding the marked appendant} and~\ref{sec:Reading objects and $u$ subwords}, the lower level details are then given. 

\subsubsection{Algorithm overview}\label{sec:Algorithm overview}
Figures~\ref{fig:reading objects when number of symbols read is upperbound} and~\ref{fig:reading objects when number of symbols read is lowerbound} give arbitrary examples that cover all possible cases for reading each of the four objects. 
In both figures $\vdash^y$ denotes the $y$ computation steps that read the entire leftmost object in the dataword on the left and produce the new dataword on the right. 
For example, in Figure~\ref{fig:reading objects when number of symbols read is upperbound} (ii) when $\encodeZero$ is read it appends $\encodeDeletion^{x+1}$ in $\lceil\frac{|\encodeZero|}{\beta}\rceil$ computation steps.
There are two cases for reading $\encodeZero$, $\encodeOne$, and $\encodeDeletionPrime$ objects with each case determined by the number of symbols read in the object (see Lemma~\ref{lem:read upper or lower bound number of symbols}). 
There is only one case for reading $\encodeDeletion$ as $\lceil\frac{|\encodeDeletion|}{\beta}\rceil=\lfloor\frac{|\encodeDeletion|}{\beta}\rfloor$. 
The objects $\encodeDeletion$ and $\encodeDeletionPrime$ are garbage objects that have no effect on the simulation.
To see this note from Figures~\ref{fig:reading objects when number of symbols read is upperbound} and~\ref{fig:reading objects when number of symbols read is lowerbound} that $\encodeDeletion$ and $\encodeDeletionPrime$ objects append only more garbage objects, and as we will see in Section~\ref{sec:Encoding the marked appendant} the shift change caused by reading an $\encodeDeletion$ or an $\encodeDeletionPrime$ does not effect algorithm control flow.
The garbage objects are introduced to simulate deletion as our binary tag system has no rule that appends the empty word~$\epsilon$.

When $\TSTC$ reads an $\encodeOne$ or an $\encodeZero$ object as shown in (i) and (ii) of Figures~\ref{fig:reading objects when number of symbols read is upperbound} and~\ref{fig:reading objects when number of symbols read is lowerbound} it simulates a computation step where $\CTSp$ reads a $1$ or a $0$.
At the beginning of the simulated computation step the currently marked appendant $\alpha_m$ is encoded by the shift value $z=(z_1m+z_2d)\bmod\beta$. 

\emph{Simulating the Definition~\ref{def:Computation CTS} computation step read a 0:} In (ii) of Figures~\ref{fig:reading objects when number of symbols read is upperbound} and~\ref{fig:reading objects when number of symbols read is lowerbound} we have the two possible cases for reading a $0$. In both cases when an $\encodeZero$ is read it gets deleted and only garbage objects are appended simulating that $\CTSp$ appends nothing. After reading $\encodeZero$ the adjacent object $\tne{a_2}$ is entered with shift $(z_1(m+1)+z_2d)\bmod\beta$, simulating that the next appendant $\alpha_{((m+1)\bmod (3x-2))}$ is marked.

\emph{Simulating the Definition~\ref{def:Computation CTS} computation step read a 1:}
In (i) of Figures~\ref{fig:reading objects when number of symbols read is upperbound} and~\ref{fig:reading objects when number of symbols read is lowerbound} we have the two possible cases for reading a $1$. In both cases when an $\encodeOne$ is entered with shift $z=((z_1m+z_2d)\bmod\beta)$ it is deleted and the encoding of $\alpha_m$ is appended. After reading $\encodeOne$ the adjacent object $\tne{a_2}$ is entered with shift $(z_1(m+1)+z_2d)\bmod\beta$, simulating that the next appendant $\alpha_{((m+1)\bmod (3x-2))}$ is marked.

\begin{figure}
\setlength{\unitlength}{.6cm}
\begin{picture}(10,3.2)
\put(0,2){(i)}
\put(1.8,2){$\tne{1,z}\tne{a_2}\ldots\tne{a_h}\qquad\vdash^{\lfloor\frac{|\encodeOne|}{\beta}\rfloor}\quad \tne{a_2,(z+z_1)\bmod\beta}\,\tne{a_3}\ldots\tne{a_h}\tne{\alpha'_{m}}$}
\put(0,1){(ii)}
\put(1.8,1){$\tne{0,z}\tne{a_2}\ldots\tne{a_h}\qquad\vdash^{\lfloor\frac{|\encodeZero|}{\beta}\rfloor}\quad \tne{a_2,(z+z_1)\bmod\beta}\,\tne{a_3}\ldots\tne{a_h}\encodeDeletion^{j}\encodeDeletionPrime\encodeDeletion^{x-j}$}
\put(0,0){(iii)}
\put(1.8,0){$\tne{\epsilon',z}\tne{a_2}\ldots\tne{a_h}\quad\;\;\;\vdash^{\lfloor\frac{|\encodeDeletionPrime|}{\beta}\rfloor}\;\;\; \tne{a_2,(z+z_2)\bmod\beta}\,\tne{a_3}\ldots\tne{a_h}\encodeDeletion^{j}\encodeDeletionPrime\encodeDeletion^{x-j-1}$} 
\end{picture}
\caption{Objects $\encodeZero$, $\encodeOne$ and $\encodeDeletionPrime$ being read by $\TSTC$ when entered with shift $z$, where $a_i\in\{\encodeDeletion,\encodeDeletionPrime,\encodeZero,\encodeOne\}$, and  $z=(z_1m+z_2d)\bmod\beta$. In (i) and (ii) $z\geqslant\beta-z_1$, and in (iii) $z\geqslant\beta-z_2$. The encoded appendant $\tne{\alpha'_{m}}$ is given in Equations~\eqref{eq:alphaPrime m1} and~\eqref{eq:alphaPrime m2}, and the values $z_1$, $z_2$, $m$, and $d$ are given in Table~\ref{tab:equalities}.}\label{fig:reading objects when number of symbols read is lowerbound}
\end{figure}

In both cases above  there is one further step that is needed to complete the simulation of the Definition~\ref{def:Computation CTS} computation step. 
Note from Equation~\eqref{eq:encoding CTS configuration} that between a pair of encoded cyclic tag system symbols $\tne{w_1}$ and $\tne{w_{2}}$ is a word of the from $\{\encodeDeletion,\encodeDeletionPrime\}^\ast$. 
So after $\tne{w_1}$ is read, the word $\{\encodeDeletion,\encodeDeletionPrime\}^\ast$ is read placing the object $\tne{w_{2}}$ at the left end of the dataword. This completes the simulated computation step as $\TSTC$ is now ready to begin reading the next encoded symbol $\tne{w_{2}}$. 
At the end of the next section we see that reading the garbage objects $\encodeDeletion$ and $\encodeDeletionPrime$ does not change the appendant encoded in the shift which means that $\tne{w_{2}}$ is entered with a shift value encoding the correct appendant $\alpha_{((m+1)\bmod (3x-2))}$.

We have not yet described how reading the objects $\encodeOne$, $\encodeZero$, $\encodeDeletion$, and $\encodeDeletionPrime$ append the appendants shown in Figures~\ref{fig:reading objects when number of symbols read is upperbound} and~\ref{fig:reading objects when number of symbols read is lowerbound}. To do so we must give the sequence of symbols read in each object when entered with shift $z$. 
The word $u$ that appears in the objects $\encodeOne$, $\encodeZero$, $\encodeDeletion$, and $\encodeDeletionPrime$ is defined such that the $u$ words read in these objects append the appendants shown in Figures~\ref{fig:reading objects when number of symbols read is upperbound} and~\ref{fig:reading objects when number of symbols read is lowerbound}. 
The $b$ symbols that appear in each object are used to control the shift with which we enter each $u$ within an object and thus control the sequence of symbols read in each $u$ (see Figure~\ref{fig:u track and encodedOne track} (i)). 
For example, if we enter an $\encodeDeletion=\encodeDeletionSequence$ with shift $z$ then the leftmost pair of $b$ symbols cause the $u$ to be entered with shift $z-2$ and we read track $\track{z-2}{u}$ (in this case $s=z-2$ in Figure~\ref{fig:u track and encodedOne track} (i)). 
If we assign track $\track{z-2}{u}$ a value that will append an $\encodeDeletion$, then when $\encodeDeletion$ is entered with shift $z$ the word $u$ is entered with shift $z-2$ and an $\encodeDeletion$ gets appended as shown in Figure~\ref{fig:reading objects when number of symbols read is upperbound} (iv).
So using the $b$ symbols in each object we control the tracks read in each $u$ within the object so that the correct appendant gets appended when the object is read. The details of reading objects and $u$ subwords are given in Section~\ref{sec:Reading objects and $u$ subwords}.

\subsubsection{Encoding the marked appendant of $\CTS'$ in the shift of $\TSTC$}\label{sec:Encoding the marked appendant} 
From Table~\ref{tab:equalities} the shift change when reading an $\encodeZero$ or an $\encodeOne$ object is $z_1$. So from Lemma~\ref{lem:shiftChange}, when an $\encodeZero$ or an $\encodeOne$ object is entered with shift $(z_1m+z_2d)\bmod\beta$, the next object immediately to its right is entered with shift $(z_1(m+1)+z_2d)\bmod\beta$ as shown in (i) and (ii) of Figures~\ref{fig:reading objects when number of symbols read is upperbound} and~\ref{fig:reading objects when number of symbols read is lowerbound}.
This shift change of $z_1$ simulates that the marked appendant changes from $\alpha_{m}$ to $\alpha_{((m+1)\bmod (3x-2))}$ (the length of the program for $\CTS'$ is $3x-2$). 
If $\CTSp$ is at the marked appendant $\alpha_m$ and then reads $3x-2$ symbols, it traverses its entire circular program and returns to appendant $\alpha_m$. 
Notice from Table~\ref{tab:equalities} that $z_1(3x-2)=\beta$, and so if we read $3x-2$ of the $\encodeZero$ and $\encodeOne$ objects, then the total shift change is $0=z_1(3x-2)\bmod\beta$.
Since the shift change value is $0$, the encoding of the marked appendant remains unchanged after reading $3x-2$ of the $\encodeZero$ and $\encodeOne$ objects, correctly simulating a traversal of the entire circular program of $\CTS'$.
In Lemma~\ref{lem:Encode appendant} it is proved that if $m_i\neq m_j$ then $((z_1m_i+z_2d_i)\bmod\beta)\neq ((z_1m_j+z_2d_j)\bmod\beta)$ for all $0\leqslant d_i,d_j<3x+1$.
This shows that each shift value $z=((z_1m+z_2d)\bmod\beta)$ encodes one and only one appendant $\alpha_{m}$. So reading $\encodeDeletionPrime$, with its shift change of $z_2$, moves from one shift value that encodes $\alpha_{m}$ to another shift value that also encodes $\alpha_{m}$ (as shown in (iii) of Figures~\ref{fig:reading objects when number of symbols read is upperbound} and~\ref{fig:reading objects when number of symbols read is lowerbound}). In other words, reading $\encodeDeletionPrime$ does not change the value of the appendant encoded in the shift. Finally, since $\encodeDeletion$ has a shift change value of 0 it too does not change the appendant encoded in the shift.
\begin{figure}
\setlength{\unitlength}{.6cm}
\begin{picture}(10,2.2)
\put(0,1.1){(i)}
\put(1,1.1){$\encodeOneTrack\;\rightarrow\;\;\encodeOneSequence$} 
\put(0,0){(ii)}
\put(1,0){$\encodeZeroTrack\;\rightarrow\;\;\encodeZeroSequence$} 
\put(17.3,1.1){(iii)}
\put(18.5,1.1){$\encodeDeletionTrack\;\rightarrow\;\;\encodeDeletionSequence$} 
\put(14.6,0){(iv)}
\put(15.8,0){$\encodeDeletionPrimeTrack\;\rightarrow\;\;\encodeDeletionPrimeSequence$}
\end{picture}
\caption{Sequence of symbols read to append $\encodeOne$ (i), $\encodeZero$ (ii), $\encodeDeletion$ (iii), and $\encodeDeletionPrime$ (iv). Rules $b\rightarrow b$ or $c\rightarrow u$ are applied to each symbol in sequence on the left to give the object it appends on the~right.}\label{fig:tracks for appending each object} 
\end{figure}

\subsubsection{Reading objects and defining the word \emph{u}}\label{sec:Reading objects and $u$ subwords}
The word $u$ is defined via Tables~\ref{tab:Tracks in u for Case 1} to~\ref{tab:Tracks in u for Case 2b} such that when each object is read it appends the correct appendant as shown in Figures~\ref{fig:reading objects when number of symbols read is upperbound} and~\ref{fig:reading objects when number of symbols read is lowerbound}. 
We will take the case of reading an $\encodeOne$ entered with shift $z<\beta-z_1$ and show that it appends $\tne{\alpha_m}$ as illustrated in Figure~\ref{fig:reading objects when number of symbols read is upperbound} (i). We will then explain how the method used to verify this case can be applied to verify the remaining cases in~Figures~\ref{fig:reading objects when number of symbols read is upperbound}~and~\ref{fig:reading objects when number of symbols read is lowerbound}.

Here we show that when $\encodeOne$ is entered with shift $z<\beta-z_1$ the sequence of symbols read in the $u$ subwords of $\encodeOne$ append $\tne{\alpha_m}=\tne{\sigma_1}\tne{\sigma_2}\ldots\tne{\sigma_v}\encodeDeletion^{x-v+1}$. 
Recall that the sequence of symbols (or track) read in a word depends on the shift with which the word is entered (see Figure~\ref{fig:u track and encodedOne track} (i)). 
From Lemma~\ref{lem:shiftChange}, when $\encodeOne=\encodeOneSequence$ is entered with shift $z$ then the leftmost $b^{10}$ causes the leftmost $u$ to be entered with shift $(z-10)\bmod\beta$, and because each $u$ has a shift change of $3x$, following the $ubb$ subword the second $u$ is entered with shift $(z+(3x-2)-10)\bmod\beta$, the third $u$ with shift $(z+2(3x-2)-10)\bmod\beta$, and so on (as shown in Figure~\ref{fig:u track and encodedOne track} (ii)). 
Now we define the track read in each $u$ in Figure~\ref{fig:u track and encodedOne track} (ii) so that it appends a single object from $\{\encodeZero,\encodeOne,\encodeDeletion,\encodeDeletionPrime\}$ at the right end of the dataword. 
The tracks that append each object are given in Figure~\ref{fig:tracks for appending each object}. For example in Figure~\ref{fig:tracks for appending each object} (iii) we see that applying the rules $b\rightarrow b$ and $c\rightarrow u$ to the sequence $\encodeDeletionTrack$ appends the object $\encodeDeletion=\encodeDeletionSequence$.
Note from Figure~\ref{fig:tracks for appending each object} that the number of symbols read to append each object is either of length $3x+1$ or $3x$.
Now note from Table~\ref{tab:equalities} and Lemma~\ref{lem:read upper or lower bound number of symbols} that when a $u$ word is read we read either $3x+1$ or $3x$ symbols, and so each $u$ subword that is read can append a single object from $\{\encodeDeletion,\encodeDeletionPrime,\encodeZero,\encodeOne\}$. For example, in Figure~\ref{fig:u track and encodedOne track} (i) if we wish $u$ to append $\encodeDeletion=\encodeDeletionSequence$ when entered with shift $s$ then we define $\track{s}{u}=\encodeDeletionTrack$.

To append the sequence of objects $\tne{\alpha_m}=\tne{\sigma_1}\tne{\sigma_2}\ldots\tne{\sigma_v}\encodeDeletion^{x-v+1}$ when reading an $\encodeOne$, the track read in the $i+1^{\textrm{th}}$ $u$ from the left in $\encodeOne$ appends the $i+1^{\textrm{th}}$ object from the left in $\tne{\alpha_m}$, where $0\leqslant i\leqslant x$. 
Thus in Figure~\ref{fig:u track and encodedOne track} (ii), for $0\leqslant i< v$ if $\sigma_{i+1}=0$, then track $\track{z+i(3x-2)-10}{u}=\encodeZeroTrack$ is read causing the word $\encodeZero=\encodeZeroSequence$ to be appended, and if $\sigma_{i+1}=1$, then track $\track{z+i(3x-2)-10}{u}=\encodeOneTrack$ is read causing the word $\encodeOne=\encodeOneSequence$ to be appended. 
There is an exception when $i=0$ and $z=0$ as we have $\beta-10=(z+i(3x-2)-10)\bmod\beta$. 
In this special case, tracks have the from $\track{\beta-10}{u}=\encodeZeroTrackSpecialCaseA$ or $\track{\beta-10}{u}=\encodeOneTrackSpecialCaseA$ with one less $b$ than usual. 
From Lemma~\ref{lem:read upper or lower bound number of symbols} and Table~\ref{tab:equalities}, when $u$ is entered with shift $\beta-10$ only $\lfloor\frac{|u|}{\beta}\rfloor=3x$ (instead of $3x+1$) symbols are read. When $u$ is entered with shift $\beta-10$ then $z=0$, and so the leftmost $b$ in the $\encodeOne$ is read and provides the first $b$ in the sequence $\encodeZeroTrack$ that prints an $\encodeZero$, or provides the first $b$ in the sequence $\encodeOneTrack$ that prints an $\encodeOne$. 
For $v\leqslant i< \frac{x}{2}-7$, each track $\track{z+i(3x-2)-10}{u}=\encodeDeletionTrack$ appends the word $\encodeDeletion$, and for $\frac{x}{2}-7\leqslant i< x$ each track $\track{z+3xi-x+4}{u}=\encodeDeletionTrack$ also appends the word $\encodeDeletion$. 
Track $\track{z+3x^2-x+2}{u}=\encodeDeletionTrack$ appends the word $\encodeDeletion$. 
There is an exception if we have and $z=\beta-z_1-x$ which gives $\beta-3x+2=(z+3x^2-x+2)$. 
In this special case, tracks have the from $\track{\beta-3x+2}{u}=\encodeDeletionTrackCaseB$ with one less $b$ than usual. From Lemma~\ref{lem:read upper or lower bound number of symbols} and Table~\ref{tab:equalities}, when $u$ is entered with shift $\beta-3x+2$ only $\lfloor\frac{|u|}{\beta}\rfloor=3x$ (instead of $3x+1$) symbols are read. 
When $u$ is entered with shift $\beta-3x+2$ a $b$ from the $b^{x+2}$ at the right end of the $\encodeOne$ is read and provides the last $b$ in the sequence $\encodeDeletionTrack$ that appends an $\encodeDeletion$.
The $u$ tracks in this paragraph show that when an $\encodeOne$ is entered with shift $z<\beta-z_1$ the encoding of $\tne{\alpha_m}$ is appended at the right end of the dataword as shown in Figure~\ref{fig:reading objects when number of symbols read is upperbound} (i). The $u$ tracks given above have the same values as the bottom eight $u$ tracks in Table~\ref{tab:Tracks in u for Case 1}.

\begin{figure}
\setlength{\unitlength}{.6cm}
\begin{picture}(10,4.5)
\put(0,3.5){(i)}
\put(1.3,3.6){$u=u_0\ldots u_{s-1}\pmb{u_s}u_{s+1}\ldots u_{s+\beta-1}\pmb{u_{s+\beta}}u_{s+\beta+1}\ldots u_{|u|-1}$} 
\put(18.2,3.6){$\track{s}{u}=u_su_{s+\beta}u_{s+2\beta}\ldots u_{s+(3x+1)\beta}$} 
\put(0,2.4){(ii)}
\put(1.3,2.4){\fontsize{10.5}{1.2}$b^{10}\;\,\track{(z-10)\bmod\beta}{u}\;\;\,b^2\;\;\track{(z+(3x-2)-10)\bmod\beta}{u}\;\;\,b^2\;\;\track{(z+2(3x-2)-10)\bmod\beta}{u}\;\;\ldots\;\; \track{(z+(\frac{x}{2}-8)(3x-2)-10)\bmod\beta}{u}\;\;\,b^2$}
\put(1.4,1.2){\fontsize{10.5}{1.2}$\track{(z+3x(\frac{x}{2}-7)-x+4)\bmod\beta}{u}\;\;\;\;\track{(z+3x(\frac{x}{2}-6)-x+4)\bmod\beta}{u}\;\;\;\;\track{(z+3x(\frac{x}{2}-5)-x+4)\bmod\beta}{u}\;\;\ldots\;\;\track{(z+3x(x-1)-x+4)\bmod\beta}{u}$}
\put(1.3,-0.1){\fontsize{10.5}{1.2}$b^2\;\;\track{(z+3x^2-x+2)\bmod\beta}{u}\;\;\;b^{x+2}$}
\end{picture}
\caption{(i) The word $u$ where $u_i\in\{b,c\}$. The symbols given in bold are read when $u$ is entered shift value $s$ and this bold symbol sequence defines track $\protect\track{s}{u}$. (ii) The tracks read in each $u$ when $\encodeOne=\encodeOneSequence$ is entered with shift $z$.
The $b$ symbols in (ii) are not part of~the~$u$~tracks.}\label{fig:u track and encodedOne track}
\end{figure}

\setlength{\extrarowheight}{5pt}

\begin{table*}[!ht]
\centering\vspace{-1ex}
\begin{tabular}{@{}c|l|l}
   Object track  & Tracks read in $u$ & Values for $z$, $i$ and $\sigma_{i+1}$ \\ \hline
  & $\track{\beta-2}{u}=\encodeDeletionTrackCaseA\quad$ & $i=0$,\;\;\;\; $z=0$ \\ \cline{2-3}
 $\track{z}{\encodeDeletion}=\encodeDeletionTrack$ & $\track{z-2}{u}=\encodeDeletionTrack\quad$ & $i=0$,\;\;\;\; $0< z-2<\beta-3x$\\  \cline{2-3}
  & $\track{z-2}{u}=b^{2}cb^{3x-3}\quad$ & $i=0$,\;\;\; $\beta-3x \leqslant z-2<\beta-2$ \\  
 \Xhline{2\arrayrulewidth}
 
    & $\track{\beta-4}u=\encodeDeletionTrackCaseA\quad$  & $i=0$,\;\;\;\; $z=0$\\  \cline{2-3}
& $\track{z-4}{u}=\encodeDeletionTrack\quad$  & $i=0$,\;\;\;\;$0< z<\beta-z_2$ \\  \cline{2-3}
    $\track{z}{\encodeDeletionPrime}=(\encodeDeletionTrack)^{x}$ &  $\track{z+3xi-6}{u}=\encodeDeletionTrack$ &$1\leqslant i< x-1$,\;\;\;\; $0\leqslant z<\beta-z_2$ \\ \cline{2-3}
&  $\track{z+3x(x-1)-8}{u}=\encodeDeletionTrack$ & $i= x-1$,\;\;\;\; $0\leqslant z\leqslant\beta-z_2-2x$ \\ \cline{2-3}
&  $\track{\beta-2x-8}{u}=\encodeDeletionTrackCaseB$ & $i= x-1$,\;\;\;\;$z=\beta-z_2-x$ \\ 
\Xhline{2\arrayrulewidth}
    & $\track{\beta-4}{u}=\encodeDeletionTrackCaseA\quad$  & $i=0$,\;\;\;\; $z=0$\\  \cline{2-3}
& $\track{z-4}{u}=\encodeDeletionTrack\quad$  & $i=0$,\;\;\;\;$0< z<\beta-z_1$ \\  \cline{2-3}
$\track{z}{\encodeZero}=(\encodeDeletionTrack)^{x+1}$&  $\track{z+3xi-6}{u}=\encodeDeletionTrack$ &$1\leqslant i< x$,\;\;\;\;  $0\leqslant z<\beta-z_1$ \\ \cline{2-3}
&  $\track{z+3x^2-8}{u}=\encodeDeletionTrack$ & $i= x$,\;\;\;\;$0\leqslant z\leqslant\beta-z_1-2x$ \\ \cline{2-3}
&  $\track{\beta-2x-8}{u}=\encodeDeletionTrackCaseB$ & $i= x$,\;\;\;\;$z=\beta-z_1-x$ \\  \cline{2-3}
\Xhline{2\arrayrulewidth}

   & $\track{\beta-10}{u}=\encodeZeroTrackSpecialCaseA$ & $i=0$,\;\;\;\; $z=0$,\;\;\;\; $\sigma_1=0$ \\ \cline{2-3}
  &  $\track{\beta-10}{u}=\encodeOneTrackSpecialCaseA$& $i=0$,\;\;\;\; $z=0$,\;\;\;\; $\sigma_1=1$ \\ \cline{2-3}\vspace{-3pt}
  & $\track{z+i(3x-2)-10}{u}=\encodeZeroTrack$ &  $0\leqslant i< v$,\;\;\;\; $0\leqslant z<\beta-z_1$,\\  
 & &$(i,z)\neq(0,0)$,\;\;\;\; $\sigma_{i+1}=0$\\ \cline{2-3}\vspace{-6pt}
 $\track{z}{\encodeOne}=\langle\alpha_m\rangle$ &$\track{z+i(3x-2)-10}{u}=\encodeOneTrack$  &  $0\leqslant i< v$,\;\;\;\; $0\leqslant z<\beta-z_1$,\\
Equation~\eqref{eq:alpha m} &&$(i,z)\neq(0,0)$,\;\;\;\; $\sigma_{i+1}=1$\\ 
\cline{2-3}
 & $\track{z+i(3x-2)-10}{u}=\encodeDeletionTrack$ &  $v\leqslant i< \frac{x}{2}-7$,\;\;\;\; $0\leqslant z<\beta-z_1$\\ \cline{2-3}
& $\track{z+3xi-x+4}{u}=\encodeDeletionTrack$ &  $\frac{x}{2}-7\leqslant i< x$,\;\;\;\; $0\leqslant z<\beta-z_1$\\ \cline{2-3}
 & $\track{z+3x^{2}-x+2}{u}=\encodeDeletionTrack$ & $i=x$,\;\;\;\; $0\leqslant z\leqslant\beta-z_1-2x$\\ \cline{2-3}
  & $\track{\beta-3x+2}{u}=\encodeDeletionTrackCaseB$ & $i=x$,\;\;\;\; $z=\beta-z_1-x$\\
\hline
\end{tabular}
\caption{Tracks read in each object. Here $\encodeDeletion$ is entered with shift $z<\beta$, $\encodeDeletionPrime$ is entered with a shift $z<\beta-z_2$, $\encodeZero$ and $\encodeOne$ are entered with a shift $z<\beta-z_1$. The values $z$, $z_1$, and $z_2$ are given in Table~\ref{tab:equalities}, and $\tne{\alpha_m}$, $\sigma_{i+1}$ and $v$ are given in Equation~\eqref{eq:alpha m}. The value $i$ indexes the position of the $u$ subword within the object being read (see Figure~\ref{fig:u track and encodedOne track} (ii)).  Here the $\bmod\;\beta$ is dropped from the underscripts in $u$ tracks as all of the underscript terms above are $0\leqslant$ and $<\beta$.}\label{tab:Tracks in u for Case 1}
\end{table*}

\setlength{\extrarowheight}{6pt}
\begin{table*}[!ht]
\centering
\begin{tabular}{@{}c|l|l@{\;}}
 Object track  & Tracks read in $u$ & Values for $z$, $i$, $j$, and $\sigma_{i+1}$ 
 \\ \hline
					& $\track{z-4}{u}=\encodeDeletionTrack$& $i=0$,\;\;\;\; $\beta-z_2-4 \leqslant z-4<\beta-3x$  
\\ \cline{2-3}
					& $\track{z-4}{u}=\encodeDeletionPrimeTrack$& $j=0$,\;\;\;\; $\beta-3x\leqslant z-4<\beta-4$, 
\\ \cline{2-3}
$\track{z}{\encodeDeletionPrime}=$	& $\track{(z+3xi-6)\bmod\beta}{u}=\encodeDeletionTrack$ & $1\leqslant i<x-1$,\;\;\;\; $i\neq j$
\\\cline{2-3}\vspace{-4pt}
$(\encodeDeletionTrack)^{j}$		& $\track{z+3xj-6}{u}=\encodeDeletionPrimeTrack$& $1\leqslant j<x-1$ \\
$\encodeDeletionPrimeTrack$		&&  $\beta-3x\leqslant z+3xj-6<\beta$
\\ \cline{2-3} 
$(\encodeDeletionTrack)^{x-j-1}$	& $\track{(z+3x(x-1)-8)\bmod\beta}{u}=\encodeDeletionTrack$&  $i=x-1$,\;\;\;\;$\beta< z+3x(x-1)-8$ 
\\\cline{2-3}\vspace{-4pt}
					& $\track{z+3x(x-1)-8}{u}=\encodeDeletionPrimeTrack$& $j=x-1$
\\
					&&$\beta-2x< z+3x(x-1)-8<\beta$
\\
\Xhline{2\arrayrulewidth}

					& $\track{z-4}{u}=\encodeDeletionTrack$& $i=0$,\;\;\;\; $\beta-z_1-4 <z-4<\beta-3x$ 
\\  \cline{2-3}
					& $\track{z-4}{u}=\encodeDeletionPrimeTrack$& $j=0$,\;\;\;\;  $\beta-3x\leqslant z-4<\beta-4$
\\\cline{2-3}
$\track{z}{\encodeZero}=$		& $\track{(z+3xi-6)\bmod\beta}{u}=\encodeDeletionTrack$ & $1\leqslant i<x$,\;\;\;\; $i\neq j$
\\\cline{2-3} \vspace{-4pt}
$(\encodeDeletionTrack)^{j}$		& $\track{z+3xj-6}{u}=\encodeDeletionPrimeTrack$& $1\leqslant j<x$ 
\\ 
$ \encodeDeletionPrimeTrack$		& &$\beta-3x\leqslant z+3xj-6<\beta$
\\\cline{2-3}
$(\encodeDeletionTrack)^{x-j}$		& $\track{(z+3x^2-8)\bmod\beta}{u}=\encodeDeletionTrack$&  $i=x$,\;\;\;\; $\beta< z+3x^2-8$
\\\cline{2-3}
					& $\track{(z+3x^2-8)}{u}=\encodeDeletionPrimeTrack$& $j=x$,\;\;\;\; $\beta-2x< z+3x^2-8<\beta$
\\
\hline

\end{tabular}
\caption{Tracks read in $\encodeDeletionPrime$ when entered with shift $z\geqslant\beta-z_2$, and tracks read in $\encodeZero$ when entered with shift $z\geqslant\beta-z_1$. The values $z$, $z_1$, and $z_2$ are given in Table~\ref{tab:equalities}. The value $i$ indexes the position of the $u$ subword within the object being read (see Figure~\ref{fig:u track and encodedOne track} (ii)). The value $j$ gives the index of the $u$ subword that appends $\encodeDeletionPrime$ (see Figure~\ref{fig:reading objects when number of symbols read is lowerbound}). The $\bmod\;\beta$ is dropped from underscripts where the term is  $<\beta$.}\label{tab:Tracks in u for Case 2a}
\end{table*}

\setlength{\extrarowheight}{7pt}
\begin{table*}[!ht]
\centering
\begin{tabular}{@{}c|l|l@{\;}}
 Object track  & Tracks read in $u$ & Values for $z$, $i$, $j$, and $\sigma_{i+1}$ 
 \\ \hline
							& $\track{z+i(3x-2)-10}{u}=\encodeZeroTrack$  &  $0\leqslant i< j$,\;\;\;\;  $i<v$,\;\;\;\;\;$\sigma_{i+1}=0$
\\ \cline{2-3}
							& $\track{z+i(3x-2)-10}{u}=\encodeOneTrack$ &   $0\leqslant i< j$,\;\;\;\;  $i<v$,\;\;\;\; $\sigma_{i+1}=1$
 \\ \cline{2-3}
 							& $\track{(z+i(3x-2)-10)\bmod\beta}{u}=\encodeZeroTrack$ &   $j< i\leqslant v$,\;\;\;\; $\sigma_{i}=0$
\\ \cline{2-3}
							& $\track{(z+i(3x-2)-10)\bmod\beta}{u}=$ &   $j< i\leqslant \underset{}{v}$,\;\;\;\; $\sigma_{i}=1$
							\\
							&$\;\;\;\qquad\qquad\qquad\encodeOneTrack$
							\\  \cline{2-3}
$\track{z}{\encodeOne}=\langle\alpha'_{m}\rangle$	& $\track{z+i(3x-2)-10}{u}=\encodeDeletionTrack$ &  $i<\frac{x}{2}-7$,\;\;\;\;  $ i<j$,\;\;\;\;  $i\geqslant v$  
\\ \cline{2-3}
Equations						& $\track{(z+i(3x-2)-10)\bmod\beta}{u}=\encodeDeletionTrack$ &  $i<\frac{x}{2}-7$,\;\;\;\;  $ i>j$,\;\;\;\;  $i>v$
  \\\cline{2-3}\vspace{-4pt}
 \eqref{eq:alphaPrime m1} and~\eqref{eq:alphaPrime m2} & $\track{z+j(3x-2)-10}{u}=\encodeDeletionPrimeTrack$ &    $j<\frac{x}{2}-7$
 \\
							&&$\beta-3x\leqslant z+j(3x-2)-10<\beta-10$
 \\\cline{2-3}
							& $\track{(z+3xi-x+4)\bmod\beta}{u}=\encodeDeletionTrack$ &  $\frac{x}{2}-7\leqslant i<x$,\;\;\;\; $i\neq j$ \\
							\cline{2-3}\vspace{-4pt}
							& $\track{z+3xj-x+4}{u}=\encodeDeletionPrimeTrack$ 	&  $\frac{x}{2}-7\leqslant j < x$
   \\ 
							&&$\beta-3x\leqslant z+3xj-x+4<\beta$
   \\\cline{2-3}
							& $\track{(z+3x^2-x+2)\bmod\beta}{u}=\encodeDeletionTrack$ & $i=x$,\;\;\;\; $j<x$\\\cline{2-3}\vspace{-4pt}
							& $\track{z+3x^2-x+2}{u}=\encodeDeletionPrimeTrack$ &  $j=x$\\
							&&$  \beta-2x< z+3x^2-x+2<\beta$\\ 
\hline
\end{tabular}
\caption{Track read in $\encodeOne$, when entered with shift $z>\beta-z_1$. The values $z$ and $z_1$ are given in Table~\ref{tab:equalities}, and $\tne{\alpha_m}$ and $v$ are given in Equations~\eqref{eq:alphaPrime m1} and~\eqref{eq:alphaPrime m2}. The value $i$ indexes the position of the $u$ subword within the object being read (see Figure~\ref{fig:u track and encodedOne track} (ii)). The value $j$ gives the index of the $u$ subword that appends $\encodeDeletionPrime$ (see Figure~\ref{fig:reading objects when number of symbols read is lowerbound}). The $\bmod\;\beta$ is dropped from underscripts where the term is  $<\beta$.}\label{tab:Tracks in u for Case 2b}
\end{table*}

Using the same method as in the previous two paragraphs one can show that the $u$ tracks given for each of the objects $\encodeDeletion$, $\encodeZero$ and $\encodeDeletionPrime$ in Table~\ref{tab:Tracks in u for Case 1} will cause the correct appendant to be appended as shown in Figure~\ref{fig:reading objects when number of symbols read is upperbound}. 
Note that once the shift values for the $u$ tracks have been determined (as in Figure~\ref{fig:u track and encodedOne track} (ii)) we can use these shift values to determine the special cases. 
When $u$ is entered with a shift $\geqslant\beta-3x$, then from Lemma~\ref{lem:read upper or lower bound number of symbols} only $3x$ (instead of $3x+1$) symbols are read and we have a special case where the object track is missing a single $b$. 
For the special cases in Table~\ref{tab:Tracks in u for Case 1} (rows 1, 3, 4, 8, 9, 13, 14, 15 and 21) the missing $b$ needed to complete the object track is provided by reading a $b$ in a sequence of $b$ symbols to the left or right of the $u$ that is entered with shift $\geqslant\beta-3x$. 
For example, from row 4 of Table~\ref{tab:Tracks in u for Case 1} when $\encodeDeletionPrime=\encodeDeletionPrimeSequence$ is entered with shift $z=0$ we read track $\track{\beta-4}{u}=\encodeDeletionTrackCaseA$ in the leftmost $u$ of $\encodeDeletionPrime$, and since $z=0$ we also read the leftmost $b$ in $\encodeDeletion$ which provides the leftmost $b$ needed to complete the track $\encodeDeletionTrack$ that appends an $\encodeDeletion$. 
Note that in Table~\ref{tab:Tracks in u for Case 1} there are no special cases (i.e.~tracks of length $3x$) for $1\leqslant i<x-1$ when reading $\encodeDeletionPrime$, and for $1\leqslant i<x$ when reading $\encodeZero$ and $\encodeOne$. 
This is because when $\encodeDeletionPrime$ is entered with shifts $<\beta-z_2$, and when $\encodeZero$ and $\encodeOne$ are entered with shifts $<\beta-z_1$, it is not possible to enter the $u$ subwords at these positions with shifts $\geqslant\beta-3x$.

The method used earlier in this section can also be used to demonstrate that the $u$ tracks for each object in Tables~\ref{tab:Tracks in u for Case 2a} and~\ref{tab:Tracks in u for Case 2b} will cause the appendants shown in Figure~\ref{fig:reading objects when number of symbols read is lowerbound} to be appended when each object is read. 
Note from the captions of Figures~\ref{fig:reading objects when number of symbols read is upperbound} and~\ref{fig:reading objects when number of symbols read is lowerbound} that the shift values differ between the two figures. 
From Table~\ref{tab:equalities} and Lemma~\ref{lem:read upper or lower bound number of symbols}, the number of symbols read in each object in  Figure~\ref{fig:reading objects when number of symbols read is upperbound} is $y\in\{\lceil\frac{|\encodeOne|}{\beta}\rceil,\lceil\frac{|\encodeZero|}{\beta}\rceil,\lceil\frac{|\encodeDeletionPrime|}{\beta}\rceil,\frac{|\encodeDeletion|}{\beta}\}$ and the number of symbols read in each object in  Figure~\ref{fig:reading objects when number of symbols read is lowerbound} is $y-1\in\{\lfloor\frac{|\encodeOne|}{\beta}\rfloor,\lfloor\frac{|\encodeZero|}{\beta}\rfloor,\lfloor\frac{|\encodeDeletionPrime|}{\beta}\rfloor\}$ (there is only one case for reading $\encodeDeletion$ since $\lceil\frac{|\encodeDeletion|}{\beta}\rceil=\lfloor\frac{|\encodeDeletion|}{\beta}\rfloor$). 
Note from Figure~\ref{fig:tracks for appending each object}, to append an $\encodeDeletion$ object we read a symbol sequence of length $3x+1$, and to append an $\encodeDeletionPrime$ object we read a symbol sequence of length $3x$. 
So when we read $y-1$ symbols (instead of $y$) in an object as shown in Figure~\ref{fig:reading objects when number of symbols read is lowerbound}, we include an $\encodeDeletionPrime$ object instead of one of the $\encodeDeletion$ objects as this gives an object track that is one symbol shorter than the tracks read in Figure~\ref{fig:reading objects when number of symbols read is upperbound}. 
Recall that only $3x$ symbols are read when $u$ is entered with a shift $\geqslant\beta-3x$, and so the location of the $\encodeDeletionPrime$ object in the sequence of objects that are appended depends on which $u$ is entered with a shift $\geqslant\beta-3x$. 
In Tables~\ref{tab:Tracks in u for Case 2a} and~\ref{tab:Tracks in u for Case 2b} we introduce the variable $j$ to denote the position of the $u$ word within the object that appends the $\encodeDeletionPrime$ object. 
In rows 2, 6, 8 and 12 in Table~\ref{tab:Tracks in u for Case 2a} and rows 7 and 11 in Table~\ref{tab:Tracks in u for Case 2b} the range of shift values for which $\encodeDeletionPrime$ gets appended by $\track{s}{u}$ is less than the usual range of $\beta-3x\leqslant s<\beta$. 
The reason for this is that the object being read is entered with a shift $z$ that does not give all values in the range $\beta-3x\leqslant s<\beta$ for these cases. 
For example, in row 12 of Table~\ref{tab:Tracks in u for Case 2a} we have $\beta-2x< z+3x^2-8<\beta$ because it is not possible to have $z+3x^2-8\leqslant\beta-2x$ when $\encodeZero$ is entered with shift $z\geqslant\beta-z_1$.

In this section we provided a method for showing that $u$ is defined via Tables~\ref{tab:Tracks in u for Case 1} to~\ref{tab:Tracks in u for Case 2b} such that each object appends the correct sequence of objects as shown in Figures~\ref{fig:reading objects when number of symbols read is upperbound} and~\ref{fig:reading objects when number of symbols read is lowerbound}. In Lemma~\ref{lem:Correctness of u} we prove the correctness of $u$ by show that we have not assigned more than one value to the same track in the word $u$.

\subsubsection{Complexity analysis}
We give the time analysis for $\TSTC$ simulating the cyclic tag systems $C$ that runs in time $t$. 
During the simulation, for every $3x-2$ objects from $\{\encodeZero,\encodeOne\}$ that are read, we enter one of these objects with shift $\geqslant\beta-z_1$ (this is because $z_1(3x-2)=\beta$ and $\{\encodeZero,\encodeOne\}$ have a shift change of $z_1$). 
From Figure~\ref{fig:reading objects when number of symbols read is lowerbound} (i) and (ii), if we enter an $\encodeZero$ or an $\encodeOne$ object with shift $\geqslant\beta-z_1$, then there is a single $\encodeDeletionPrime$ object in the sequence of objects that are appended. So after reading $t$ objects from $\{\encodeZero,\encodeOne\}$ to simulate $t$ steps of $C$, we have $O(t)$ of the $\encodeDeletionPrime$ objects in the dataword of $\TSTC$. 
For each object read from $\{\encodeDeletionPrime,\encodeZero,\encodeOne\}$, a constant number (independent of the input) of the $\encodeDeletion$ objects are appended, and so we have $O(t)$ of the $\encodeDeletion$ objects in the dataword of $\TSTC$. 
There are $O(t)$ objects from $\{\encodeZero,\encodeOne\}$ in the dataword, and so the space used by $\TSTC$ is $O(t)$. 
Between each pair of objects $\tne{w_1},\tne{w_2}\in\{\encodeZero,\encodeOne\}$ that encode adjacent symbols in the dataword of $\CTS$ there are $O(t)$ of the $\encodeDeletion$ and $\encodeDeletionPrime$ objects. 
So the word $\tne{w_1}\{\encodeDeletion,\encodeDeletionPrime\}^\ast$ that is read to simulate a computation step, as described in the second last paragraph Section~\ref{sec:Algorithm overview}, has $O(t)$ objects.
From Figures~\ref{fig:reading objects when number of symbols read is upperbound} and~\ref{fig:reading objects when number of symbols read is lowerbound} it takes a constant number of steps to read each object, and thus reading $\tne{w_1}\{\encodeDeletion,\encodeDeletionPrime\}^\ast$ to simulate a single computation step takes time $O(t)$.
So, $\TSTC$ simulates a single step of $\CTS$ in time $O(t)$, and $t$ steps of $C$ in time $O(t^2)$.

\subsection{Correctness of $\TSTC$}
Note that in addition to the proof of correctness given here, $\TSTC$ was implement in software and tested extensively. Below, the correctness of $\TSTC$ is proved by showing that it correctly simulates an arbitrary computation step of the cyclic tag system $\CTSp$. In the third, fourth and fifth paragraphs of Section~\ref{sec:Algorithm overview} is an overview of how $\TSTC$ reads a word of the form $\tne{w_1,z}\{\encodeDeletion,\encodeDeletionPrime\}^\ast$ (where $w_1\in\{1,0\}$) to simulate a computation step of $\CTSp$. Lemma~\ref{lem:simulating a computation step of C} shows that $\TSTC$ reads $\tne{w_1,z}\{\encodeDeletion,\encodeDeletionPrime\}^\ast$ to correctly simulate an arbitrary computation step of $\CTSp$. The $\TSTC$ dataword immediately before the simulated computation step is given by Equation~\eqref{eq:TS dataword before simulated computation step} and the $\TSTC$ dataword immediately after the simulated computation step is given by Equation~\eqref{eq:TS dataword after simulated computation step}. In Equation~\eqref{eq:TS dataword after simulated computation step} the next encoded symbol to be read, $\tne{w_2}$, is at the left end of the dataword, and so after the simulated computation step  the dataword has the correct form to begin the simulation of the next computation step. It follows that $\TSTC$ correctly simulates the computation $\CTSp$. Recall that cyclic tag systems end their computation by entering a repeating sequence of configurations. While $\TSTC$ correctly simulates this repeating sequence of configurations, $\TSTC$ itself does not enter a repeating sequence as the number of garbage objects in the dataword increases with each simulated computation step.

In Lemma~\ref{lem:simulating a computation step of C} the objects $\encodeOne$, $\encodeZero$, $\encodeDeletion$, and $\encodeDeletionPrime$ are defined in Section~\ref{sec:Encoding used by tag system} and the values $z_1$, $z_2$, $m$ and $d$ can be found in Table~\ref{tab:equalities}. Equations~\eqref{eq:TS dataword before simulated computation step} and~\eqref{eq:TS dataword after simulated computation step} encode arbitrary datawords of $C'$ in the manner described by Equation~\eqref{eq:encoding CTS configuration}.

\begin{lemma}[$\TSTC$ simulates an arbitrary computation step of $\CTSp$]\label{lem:simulating a computation step of C}
Given a dataword of the form
\begin{equation}\label{eq:TS dataword before simulated computation step}
\tne{w_1,z}\{\encodeDeletion,\encodeDeletionPrime\}^\ast\tne{w_2}\{\encodeDeletion,\encodeDeletionPrime\}^\ast\tne{w_3}\ldots \{\encodeDeletion,\encodeDeletionPrime\}^\ast\tne{w_l}\{\encodeDeletion,\encodeDeletionPrime\}^\ast 
\end{equation}
where $z=(z_1m+z_2d)\bmod\beta$ and $w_i\in\{0,1\}$, then a single round of $\TSTC$ on the word $\tne{w_1,z}\{\encodeDeletion,\encodeDeletionPrime\}^\ast$ gives a dataword of the form
\begin{equation}\label{eq:TS dataword after simulated computation step}
\tne{w_2,z'}\{\encodeDeletion,\encodeDeletionPrime\}^\ast\tne{w_3}\ldots \{\encodeDeletion,\encodeDeletionPrime\}^\ast\tne{w_l}\{\encodeDeletion,\encodeDeletionPrime\}^\ast w' \{\encodeDeletion,\encodeDeletionPrime\}^\ast
\end{equation}
where $z'=(z_1(m+1)+z_2d')\bmod\beta$ and $0\leqslant d'<3x+1$, and
\begin{equation}\label{eq:values for w'}
w'=
\begin{cases} \tne{\alpha_m} &\quad\textrm{if}\quad  \tne{w_1}=\encodeOne\quad \textrm{and}\quad z<\beta-z_1\\
 \tne{\alpha_m'} &\quad\textrm{if}\quad  \tne{w_1}=\encodeOne\quad \textrm{and}\quad z\geqslant\beta-z_1\\
\encodeDeletion^{x+1} &\quad\textrm{if}\quad \tne{w_1}=\encodeZero\quad \textrm{and}\quad z<\beta-z_1\\
\encodeDeletion^{j}\encodeDeletionPrime\encodeDeletion^{x-j}  &\quad\textrm{if}\quad  \tne{w_1}=\encodeZero\quad \textrm{and}\quad z\geqslant\beta-z_1
\end{cases}
\end{equation}
\end{lemma}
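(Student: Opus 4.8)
The plan is to decompose the single round on $\tne{w_1,z}\{\encodeDeletion,\encodeDeletionPrime\}^\ast$ into two phases — first reading the leading object $\tne{w_1}$, then reading the garbage prefix $\{\encodeDeletion,\encodeDeletionPrime\}^\ast$ — and to track two quantities throughout: the word appended at the right end of the dataword, and the shift with which the current leftmost object is entered. Since each reading step appends only to the right end, the appended words never perturb the shift bookkeeping governing the objects still waiting to be read, so the two quantities can be tracked independently.

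For the first phase, I would begin by invoking Lemma~\ref{lem:read upper or lower bound number of symbols} to fix which reading case applies. As $\encodeZero$ and $\encodeOne$ have shift change $z_1$ (Table~\ref{tab:equalities}), the hypothesis $z<\beta-z_1$ places us in the $\lceil\cdot\rceil$ case of Figure~\ref{fig:reading objects when number of symbols read is upperbound}, while $z\geqslant\beta-z_1$ places us in the $\lfloor\cdot\rfloor$ case of Figure~\ref{fig:reading objects when number of symbols read is lowerbound}. For each of the four $(\tne{w_1},z)$ combinations I would read off from Tables~\ref{tab:Tracks in u for Case 1}--\ref{tab:Tracks in u for Case 2b} the track read in each $u$ subword of $\tne{w_1}$ (indexed as in Figure~\ref{fig:u track and encodedOne track}(ii)) and verify, exactly as demonstrated for the representative $\encodeOne$-with-$z<\beta-z_1$ case in Section~\ref{sec:Reading objects and $u$ subwords}, that these tracks append precisely the word $w'$ prescribed by Equation~\eqref{eq:values for w'}. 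Finally, since the shift change of both $\encodeZero$ and $\encodeOne$ is $z_1$, Lemma~\ref{lem:shiftChange} gives that once $\tne{w_1}$ is consumed the remaining dataword is entered with shift $(z+z_1)\bmod\beta=(z_1(m+1)+z_2d)\bmod\beta$.

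For the second phase I would read the objects of the garbage prefix $\{\encodeDeletion,\encodeDeletionPrime\}^\ast$ one at a time. From Figures~\ref{fig:reading objects when number of symbols read is upperbound}(iii),(iv) and~\ref{fig:reading objects when number of symbols read is lowerbound}(iii) and the associated table rows, each $\encodeDeletion$ appends only $\encodeDeletion$ and has shift change $0$, and each $\encodeDeletionPrime$ appends only $\encodeDeletion$ and $\encodeDeletionPrime$ objects and has shift change $z_2$. Hence this phase appends only garbage objects — these form the trailing $\{\encodeDeletion,\encodeDeletionPrime\}^\ast$ sitting immediately after $w'$ in Equation~\eqref{eq:TS dataword after simulated computation step} — and brings $\tne{w_2}$ to the left end of the dataword. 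By repeated application of Lemma~\ref{lem:shiftChange}, if $k$ of the garbage objects read are $\encodeDeletionPrime$, then $\tne{w_2}$ is entered with shift $(z_1(m+1)+z_2(d+k))\bmod\beta$. Using $z_2(3x+1)=\beta$ (Table~\ref{tab:equalities}) to reduce the $\encodeDeletionPrime$ count modulo $3x+1$, I set $d'=(d+k)\bmod(3x+1)$ and obtain $z'=(z_1(m+1)+z_2d')\bmod\beta$ with $0\leqslant d'<3x+1$, exactly the shift claimed in Equation~\eqref{eq:TS dataword after simulated computation step}.

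The main obstacle I anticipate is the bookkeeping inside the first phase: each of the four cases requires enumerating the $u$-subword shifts across every index range of Figure~\ref{fig:u track and encodedOne track}(ii) and, in particular, handling the boundary \emph{special cases} — those $u$ subwords entered with shift $\geqslant\beta-3x$, where Lemma~\ref{lem:read upper or lower bound number of symbols} forces only $3x$ symbols to be read and the object track is short one $b$ symbol. I would dispatch these as in Section~\ref{sec:Reading objects and $u$ subwords}, showing that the missing $b$ is supplied by an adjacent block of $b$ symbols within the object. In the $z\geqslant\beta-z_1$ cases this short read is precisely what converts one appended $\encodeDeletion$ into an $\encodeDeletionPrime$, accounting for the single $\encodeDeletionPrime$ appearing in $\tne{\alpha_m'}$ and in $\encodeDeletion^{j}\encodeDeletionPrime\encodeDeletion^{x-j}$. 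Everything else reduces to routine verification against the tables.
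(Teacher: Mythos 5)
Your two-phase decomposition is essentially the paper's own argument: the paper likewise reduces the lemma to verifying the behaviour displayed in Figures~\ref{fig:reading objects when number of symbols read is upperbound} and~\ref{fig:reading objects when number of symbols read is lowerbound}, handles the leading object by the track-by-track method of Section~\ref{sec:Reading objects and $u$ subwords} (including the short reads when a $u$ subword is entered with shift $\geqslant\beta-3x$), and handles the garbage prefix by accumulating shift changes of $z_2$ and reducing modulo $3x+1$ via $z_2(3x+1)=\beta$ to obtain $d'$. So in outline you and the paper agree.

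There is, however, one genuine gap. Your closing claim that ``everything else reduces to routine verification against the tables'' presupposes that Tables~\ref{tab:Tracks in u for Case 1} to~\ref{tab:Tracks in u for Case 2b} actually define a single binary word $u$. Those tables assign values to tracks $\track{s}{u}$ arising from four different objects entered with many different shifts, and the same physical track of $u$ is reached from several of these situations; if any two assignments demanded different values for the same $\track{s}{u}$, no word $u$ would exist and every verification against the tables would be vacuous. The paper treats this as part of the proof of this very lemma: it argues that the stated ranges of $m$ and $d$ cover all shifts that can occur, invokes Lemma~\ref{lem:Correctness of u} to show each track is assigned one and only one value, and uses Lemma~\ref{lem: ui words with unique shift values} (together with Lemma~\ref{lem:Encode appendant}) to show that the tracks $\track{(z+i(3x-2)-10)\bmod\beta}{u}$ used to append the encoding of $\alpha_m$ are read only in the intended circumstances, so that the $\sigma_{i+1}$-dependent assignments do not collide across different appendants. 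Without some such consistency argument your proof establishes only that the simulation works \emph{if} a word $u$ realising the tables exists, not that it does.
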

\begin{proof}
We begin by showing that to prove this lemma it is sufficient to verify that $\TSTC$ behaves as described in Figures~\ref{fig:reading objects when number of symbols read is upperbound} and~\ref{fig:reading objects when number of symbols read is lowerbound}. 
In (i) and (ii) of Figures~\ref{fig:reading objects when number of symbols read is upperbound} and~\ref{fig:reading objects when number of symbols read is lowerbound} each dataword on left gives one of the four possible cases for Equation~\eqref{eq:TS dataword before simulated computation step}. 
These four cases are given by the four possible values for the pair $(\tne{w_1},z)$ which also determine the four cases in Equation~\eqref{eq:values for w'}. 
Note that if $\TSTC$ behaves as described in the (i) and (ii) of Figures~\ref{fig:reading objects when number of symbols read is upperbound} and~\ref{fig:reading objects when number of symbols read is lowerbound}, then the correct value for $w'$ is appended for each of the four cases in Equation~\eqref{eq:values for w'}.
So proving the correctness of (i) and (ii) in Figures~\ref{fig:reading objects when number of symbols read is upperbound} and~\ref{fig:reading objects when number of symbols read is lowerbound} verifies that the correct value for $w'$ is appended. 

From (i) and (ii) of Figures~\ref{fig:reading objects when number of symbols read is upperbound} and~\ref{fig:reading objects when number of symbols read is lowerbound}, when $\tne{w_1}\in\{\tne{1},\tne{0}\}$ is entered with shift $z=(z_1m+z_2d)\bmod\beta$, the object immediately to the right is entered with a shift of the form $(z_1(m+1)+z_2d)\bmod\beta$. 
So from Equation~\eqref{eq:TS dataword before simulated computation step}, we enter the leftmost object in the word $\{\encodeDeletion,\encodeDeletionPrime\}^\ast\tne{w_2}$ with shift $(z_1(m+1)+z_2d)\bmod\beta$. From Table~\ref{tab:equalities} an $\encodeDeletion$ has a shift change of 0 and $\encodeDeletionPrime$ has a shift change of $z_2$, and so from Lemma~\ref{lem:shiftChange} each object in the word $\{\encodeDeletion,\encodeDeletionPrime\}^\ast\tne{w_2}$ is entered with a shift of the form $z'=(z_1(m+1)+z_2d')\bmod\beta$. Here $0\leqslant d'<3x+1$ since $0=z_2(3x+1)\bmod\beta$. 
From (iii) and (iv) in Figure~\ref{fig:reading objects when number of symbols read is upperbound} and (iii) in Figure~\ref{fig:reading objects when number of symbols read is lowerbound}, when $\encodeDeletion$ and $\encodeDeletionPrime$ objects are entered with a shift of the form $z'=(z_1(m+1)+z_2d')\bmod\beta$, they append only $\encodeDeletion$ and $\encodeDeletionPrime$ objects. 
(From Section~\ref{sec:length of objects}, the range of values for $m$ and $d$ cover all possible shift values, and thus shift values of the from $(z_1(m+1)+z_2d')\bmod\beta$ are covered by the cases in Figures~\ref{fig:reading objects when number of symbols read is upperbound} and~\ref{fig:reading objects when number of symbols read is lowerbound}). 
So proving the correctness of (iii) and (iv) in Figure~\ref{fig:reading objects when number of symbols read is upperbound} and (iii) in Figure~\ref{fig:reading objects when number of symbols read is lowerbound} verifies that reading the word $\{\encodeDeletion,\encodeDeletionPrime\}^\ast$ in Equation~\eqref{eq:TS dataword before simulated computation step} appends a word of the from $\{\encodeDeletion,\encodeDeletionPrime\}^\ast$. 

From the two paragraphs above it follows that if Figures~\ref{fig:reading objects when number of symbols read is upperbound} and~\ref{fig:reading objects when number of symbols read is lowerbound} are correct, then given a dataword of the form shown in Equation~\eqref{eq:TS dataword before simulated computation step}, $\TSTC$ produces a dataword of the form shown in Equation~\eqref{eq:TS dataword after simulated computation step}. 
We complete the proof of this lemma by demonstrating the correctness of Figures~\ref{fig:reading objects when number of symbols read is upperbound} and~\ref{fig:reading objects when number of symbols read is lowerbound}. 
In each line of Figures~\ref{fig:reading objects when number of symbols read is upperbound} and~\ref{fig:reading objects when number of symbols read is lowerbound} the shift with which object $\tne{a_2}$ is entered follows immediately from the shift change caused by reading the leftmost object (see Lemma~\ref{lem:shiftChange} and Table~\ref{tab:equalities}). 
For example in Figure~\ref{fig:reading objects when number of symbols read is upperbound} (i) an $\encodeOne$ is entered with shift $z$, and because $\encodeOne$ has a shift change of $z_1$, the object $\tne{a_2}$ immediately to the right is entered with shift $z+z_1$.
Given that we can show that $\tne{a_2}$ is entered with the correct shift for each case in Figures~\ref{fig:reading objects when number of symbols read is upperbound} and~\ref{fig:reading objects when number of symbols read is lowerbound}, it only remains to show that when each object is read the correct appendant gets appended. 
The method demonstrated in Section~\ref{sec:Reading objects and $u$ subwords} can be applied to show that the word $u$ is defined via Tables~\ref{tab:Tracks in u for Case 1} to~\ref{tab:Tracks in u for Case 2b} such that when each object is read it appends the appendants as shown in Figures~\ref{fig:reading objects when number of symbols read is upperbound} and~\ref{fig:reading objects when number of symbols read is lowerbound}. 
In Section~\ref{sec:Reading objects and $u$ subwords} the method was applied to only the case in Figure~\ref{fig:reading objects when number of symbols read is lowerbound} (i), however in Section~\ref{sec:Reading objects and $u$ subwords} it was also explained how to apply the method to the remaining cases in Figures~\ref{fig:reading objects when number of symbols read is upperbound} and~\ref{fig:reading objects when number of symbols read is lowerbound} and so we do not give them here. 

In the previous paragraph we have shown how to verify that in each object the tracks for $u$ in Tables~\ref{tab:Tracks in u for Case 2a} to~\ref{tab:Tracks in u for Case 2b} will append the appendants shown in Figures~\ref{fig:reading objects when number of symbols read is upperbound} and~\ref{fig:reading objects when number of symbols read is lowerbound}. 
To complete our proof we show that all possible tracks read in $u$ are given in Tables~\ref{tab:Tracks in u for Case 2a} to~\ref{tab:Tracks in u for Case 2b}, and that one and only one value has been assigned to each track.
Using the method in paragraph 2 of Section~\ref{sec:Reading objects and $u$ subwords} and Figure~\ref{fig:u track and encodedOne track} (ii),
one can to verify that the set of all tracks for an object entered with shift $z$ appears in Tables~\ref{tab:Tracks in u for Case 1} to~\ref{tab:Tracks in u for Case 2b}.
The entire range of values for $z$ (given in  Section~\ref{sec:length of objects}) is covered in Tables~\ref{tab:Tracks in u for Case 1} to~\ref{tab:Tracks in u for Case 2b}.
So all possible $u$ tracks are given in Tables~\ref{tab:Tracks in u for Case 1} to~\ref{tab:Tracks in u for Case 2b} and from Lemma~\ref{lem:Correctness of u} each $u$ track in Tables~\ref{tab:Tracks in u for Case 1} to~\ref{tab:Tracks in u for Case 2b} is assigned one and only one value. 
\end{proof} 

The remainder of this section contains lemmas used in the proof of 
Lemma~\ref{lem:simulating a computation step of C} and in explanations in Section~\ref{sec:Simulation algorithm}. Before each lemma we briefly explain its significance for our algorithm.

Section~\ref{sec:Reading objects and $u$ subwords} shows how to verify that in each object the tracks for $u$ in Tables~\ref{tab:Tracks in u for Case 2a} to~\ref{tab:Tracks in u for Case 2b} will append the appendants shown in Figures~\ref{fig:reading objects when number of symbols read is upperbound} and~\ref{fig:reading objects when number of symbols read is lowerbound}.
Lemma~\ref{lem:Correctness of u} shows that there are no contradictions in Tables~\ref{tab:Tracks in u for Case 1},~\ref{tab:Tracks in u for Case 2a} and~\ref{tab:Tracks in u for Case 2b}.
\begin{lemma}\label{lem:Correctness of u}
Each track of the form $\track{s}{u}$ in Tables~\ref{tab:Tracks in u for Case 1},~\ref{tab:Tracks in u for Case 2a}, and~\ref{tab:Tracks in u for Case 2b} has been assigned one and only one value.  
\end{lemma}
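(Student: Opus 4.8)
The plan is to exploit that the tracks of $u$ partition it: $\track{s}{u}$ consists exactly of the symbols of $u$ in positions congruent to $s$ modulo $\beta$, so distinct shifts $s\in\{0,\ldots,\beta-1\}$ index disjoint sets of positions. Assigning a value to $\track{s}{u}$ therefore constrains only the positions $\equiv s$, independently of every other track, and a contradiction can arise only if two entries of Tables~\ref{tab:Tracks in u for Case 1}--\ref{tab:Tracks in u for Case 2b} demand \emph{different} words for the \emph{same} shift $s$. Two entries demanding the same word are automatically consistent. So it suffices to show that whenever two entries share a shift value $s$ they assign equal words; any track appearing in a table trivially receives at least one value, so this also yields ``one and only one''.

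First I would eliminate every conflict between a full track (length $3x+1$: $\encodeDeletionTrack$, $\encodeZeroTrack$, $\encodeOneTrack$) and a short track (length $3x$: $\encodeDeletionPrimeTrack$ and the special-case words $\encodeDeletionTrackCaseA$, $\encodeDeletionTrackCaseB$, $\encodeZeroTrackSpecialCaseA$, $\encodeOneTrackSpecialCaseA$) using the length dichotomy of Lemma~\ref{lem:read upper or lower bound number of symbols}: since $|u|=(3x+1)\beta-3x$ has shift change $3x$, a track read at shift $s$ has length $3x+1$ exactly when $s<\beta-3x$ and length $3x$ exactly when $s\geqslant\beta-3x$. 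I would check that every row assigning a full value confines its shift to $[0,\beta-3x)$ and every row assigning a short value to $[\beta-3x,\beta)$ (this is precisely what the shift conditions in the rightmost columns record), so a full and a short value can never be demanded at the same $s$.

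Next, within each length class I would compare the explicit (reduced mod $\beta$) shift intervals of the tables. Reading one object at a fixed entry shift $z$, the tracks indexed by $i$ occupy strictly increasing shifts spanning a window shorter than $\beta$, hence are pairwise distinct and cannot self-conflict; the remaining comparisons are between rows assigning genuinely different words, and I would argue their intervals are disjoint. Among the short values this is a direct check that $\encodeDeletionPrimeTrack$ (located by the index $j$ of the unique appended $\encodeDeletionPrime$) and each boundary special case sit at disjoint shift positions.

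The hard part is $\encodeZeroTrack$ versus $\encodeOneTrack$, and more generally the appendant-carrying tracks versus the $\encodeDeletionTrack$ garbage tracks, since these are the only values whose content depends on arbitrary program data (the bit $\sigma_{i+1}$ of $\alpha_m$), so a genuine collision would be irreparable. For these I must show that the shift $s=(z_1m+z_2d+i(3x-2)-10)\bmod\beta$ at which a bit-track is read determines the pair $(m,i)$, and hence $\sigma_{i+1}$, independently of the garbage count $d$, and that such shifts are disjoint from those carrying the garbage value $\encodeDeletionTrack$. This is a decoding argument extending Lemma~\ref{lem:Encode appendant}, which already gives that the object-entry shift determines $m$. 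Using $z_2=x(3x-2)$ one rewrites $s+10\equiv z_1m+(3x-2)(xd+i)\pmod\beta$, a mixed-radix form with digits $m\in[0,3x-2)$ and $xd+i$; because $x$ is even the weights $z_1$ and $3x-2$ are not coprime ($\gcd=2$), so this form is \emph{not} injective in general. The essential point, which I expect to be the main obstacle, is that restricting to the appendant range $0\leqslant i<\frac{x}{2}-7$ (the reason the construction forces $r<\frac{x}{2}-7$) removes exactly the offending collisions, so that $s$ recovers $(m,i)$ uniquely on the valid range. Once this holds, every value demanded by Tables~\ref{tab:Tracks in u for Case 1}--\ref{tab:Tracks in u for Case 2b} is a function of $s$ alone, and with the interval disjointness above it follows that no track is assigned two different values.
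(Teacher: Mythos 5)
Your proposal is correct and follows essentially the same route as the paper: the reduction to checking that no two table entries assign different words to the same shift, the split by track length ($s<\beta-3x$ versus $s\geqslant\beta-3x$) via Lemma~\ref{lem:read upper or lower bound number of symbols}, the observation that the remaining full-length tracks all carry the garbage value $\encodeDeletionTrack$ and so cannot conflict, and the identification of the crux as showing that the shift of an appendant-carrying track determines $(m,i)$ on the range $0\leqslant i<\frac{x}{2}-7$ correspond exactly to the paper's Cases 2, 3 and 1 together with Lemmas~\ref{lem: ui words with unique shift values} and~\ref{lem:Encode appendant}. Your decoding step is the same computation the paper performs by reducing the underscripts modulo $x$ (using $z_1\equiv z_2\equiv 0\pmod x$, so the track shift is $\equiv -2i-10\pmod x$, which is injective for $i<\frac{x}{2}$), just phrased through the mixed-radix/$\gcd$ lens.
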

\begin{proof} 
Below we state each case followed by the rows from Tables~\ref{tab:Tracks in u for Case 1},~\ref{tab:Tracks in u for Case 2a}, and~\ref{tab:Tracks in u for Case 2b} to which the case applies and then we give the proof of that case.

Case 1: Tracks of the form $\track{(z+i(3x-2)-10)\bmod \beta}{u}$ for $0\leqslant i< \frac{x}{2}-7$ (rows 14 to 18 in Table~\ref{tab:Tracks in u for Case 1} and rows 1 to 7 in Table~\ref{tab:Tracks in u for Case 2b}).
Tracks of this form are used to append the encoding of $\alpha_m$ (see Tables~\ref{tab:Tracks in u for Case 1} and~\ref{tab:Tracks in u for Case 2b}). 
There are three forms of encoding for $\alpha_m$ (see Equations~\eqref{eq:alpha m} to~\eqref{eq:alphaPrime m2}). 
The encoding that is used depends on the shift $z$ with which $\encodeOne$ is entered. Equation~\eqref{eq:alpha m} is used when $z<\beta-z_1$ and the choice between Equations~\eqref{eq:alphaPrime m1} and~\eqref{eq:alphaPrime m2} depend on whether $j<v$ or $j\geqslant v$ (see third column of Table~\ref{tab:Tracks in u for Case 2b}).
Note from rows 7, 9 and 11 of Table~\ref{tab:Tracks in u for Case 2b} that the value of $j$ depends on the value of $z$.
So the shift value $z$ determines which of the encodings in Equations~\eqref{eq:alpha m} to~\eqref{eq:alphaPrime m2} to choose for $\alpha_m$, and from Lemma~\ref{lem:Encode appendant} the shift value $z=((z_1m+z_2d)\bmod\beta)$ encodes $\alpha_m$ and only $\alpha_m$. 
It follows that there is one and only one encoding associated with each $z$.
From Lemma~\ref{lem: ui words with unique shift values} each track of the form $\track{(z+i(3x-2)-10)\bmod \beta}{u}$ for $0\leqslant i< \frac{x}{2}-7$ is entered if and only if an $\encodeOne$ is entered with shift $z$ and the $i+1^{\textrm{th}}$ $u$ is being read. 
This means that track $\track{(z+i(3x-2)-10)\bmod \beta}{u}$ will be read if and only if we are appending the $i+1^\textrm{th}$ object in the encoding of appendant $\alpha_m$ as described in the third paragraph of Section~\ref{sec:Reading objects and $u$ subwords}.
Since there is one and only one encoding associated with each $z$ it follows that each track of the from $\track{(z+i(3x-2)-10)\bmod \beta}{u}$ for $0\leqslant i< \frac{x}{2}-7$ is assigned one and only one value.

Case 2: Tracks of the form $\track{s}{u}$, where $s<\beta-3x$ and $s\neq (z+i(3x-2)-10)\bmod \beta$ for $0\leqslant i< \frac{x}{2}-7$ (rows 2, 5, 6, 7, 10, 11, 12, 19 and 20 in Table~\ref{tab:Tracks in u for Case 1}, rows 1, 3, 5, 7, 9, and 11 in Table~\ref{tab:Tracks in u for Case 2a}, and rows 8 and 10 in Table~\ref{tab:Tracks in u for Case 2b}). All $u$ tracks for this case append only $\encodeDeletion$ objects. This can be easily verified by checking the above mentioned rows. It follows that each track from this case is assigned one and only one value.

Case 3: Tracks of the form $\track{s}{u}$, where $s\geqslant\beta-3x$ and $s\neq (z+i(3x-2)-10)\bmod \beta$ for $0\leqslant i< \frac{x}{2}-7$ (rows 1, 3, 4, 8, 9, 13 and 21 in Table~\ref{tab:Tracks in u for Case 1}, and rows 2, 4, 6, 8, 10 and 12 in Table~\ref{tab:Tracks in u for Case 2a} and rows 9 and 11 in Table~\ref{tab:Tracks in u for Case 2b}). The $u$ tracks entered with shift $\geqslant\beta-3x$  either append $\encodeDeletionPrime$, or result in one of the special cases given by rows 1, 3, 4, 8, and 21 of Table~\ref{tab:Tracks in u for Case 1}. 
Here we omit rows 9 and 13 as they define the same tracks as rows 4 and 8. 
No two of the cases from rows 1, 3, 4, 8, and 21 in Table~\ref{tab:Tracks in u for Case 1} have the same underscript and thus no conflicts occurs when comparing one special case with another special case. 
Note that to see the difference between the underscripts in these special cases one should keep in mind that $x>14$. For example, to show that the row 8 underscript $\beta-2x-8$ is not equal to the row 21 underscript $\beta-3x+2$ we must have $x>10$.
Given that no conflicts occurs between the special cases, it only remains to show that these special cases have no conflicts with tracks that append $\encodeDeletionPrime$. 
The $u$ tracks that append $\encodeDeletionPrime$ are given by rows 2, 4, 6, 8, 10 and 12 in Table~\ref{tab:Tracks in u for Case 2a} and rows 9 and 11 of Table~\ref{tab:Tracks in u for Case 2b}. 
(We need not consider row 7 of Table~\ref{tab:Tracks in u for Case 2b} as this was covered by Case 1.) 
For these cases comparing the underscripts $\bmod\, x$ shows that the values in the underscripts of some of the special cases differ from those that append $\encodeDeletionPrime$. 
For example, in Table~\ref{tab:Tracks in u for Case 1} the underscript in row 8 gives $x-8=(\beta-2x-8)\bmod x$ (since $0=\beta\bmod x$) and in Table~\ref{tab:Tracks in u for Case 2a} the underscript in row 4 gives $x-6=(z+3xi-6)\bmod x$ (since $0=z\bmod x$). 
It follows that there is no conflict between these two rows as they define tracks in $u$ for two different shift values. 
Comparing the underscript values $\bmod\, x$ does not work for all cases as some underscript values may have different shift values but the same values $\bmod\, x$. 
In each of these cases by looking at the range of values in the third column of the tables one sees that the shift values in the underscripts do not coincide. 
For example, in row 8 of Table~\ref{tab:Tracks in u for Case 1} we have a shift of $\beta-2x-8$ and in row 6 of Table~\ref{tab:Tracks in u for Case 2a} we have a shift $z+3x(x-1)-8$. 
From the third column of row 6 in Table~\ref{tab:Tracks in u for Case 2a}, we have the range of values $\beta-2x< z+3x(x-1)-8<\beta$ and so here $z+3x(x-1)-8\neq \beta-2x-8$.
Using this method one finds for the remaining cases (i.e. those not covered by comparing underscripts $\bmod\, x$) that no underscript for a $u$ track from rows 1, 3, 4, 8, and 21 has the same value as an underscript for a $u$ track given by rows 2, 4, 6, 8, 10 and 12 in Table~\ref{tab:Tracks in u for Case 2a} and rows 9 and 11 of Table~\ref{tab:Tracks in u for Case 2b}.
So tracks of the form $\track{s}{u}$ are assigned one and only one value, where $s\geqslant\beta-3x$ and $s\neq (z+i(3x-2)-10)\bmod \beta$ for $0\leqslant i< \frac{x}{2}-7$.
\end{proof}

Recall from paragraph 3 of Section~\ref{sec:Reading objects and $u$ subwords}, that when an $\encodeOne$ is entered with shift $z=(z_1m+z_2d)\bmod\beta$, the $i+1^{\textrm{th}}$ $u$ read appends the $i+1^{\textrm{th}}$ object from the left in the encoding of $\alpha_m=\sigma_1\sigma_2\ldots\sigma_v$. 
So the shift value for the track read in the $i+1^{\textrm{th}}$ $u$ from the left when $\encodeOne$ is entered with shift $z=(z_1m+z_2d)\bmod\beta$ must be unique.
From Section~\ref{sec:Cyclic tag system C'}, $v\leqslant r<\frac{x}{2}-7$ and so we need not concern ourselves with values where $i\geqslant\frac{x}{2}-7$. 
From Figure~\ref{fig:u track and encodedOne track} (ii), for $0\leqslant i<\frac{x}{2}-7$ track $\track{(z+i(3x-2)-10)\bmod\beta}{u}$ is read in the $i+1^{\textrm{th}}$ $u$ from the left when $\encodeOne$ is entered with shift $z$. 
For this reason, in Lemma~\ref{lem: ui words with unique shift values} we show for $0\leqslant k<\frac{x}{2}-7$ that $\track{(z'+k(3x-2)-10)\bmod\beta}{u}$ is read if and only if $\encodeOne$ is entered with shift $z'$ and we are reading $k+1^{\textrm{th}}$ $u$ from the left.

\begin{lemma}\label{lem: ui words with unique shift values}
Let $z=(z_1m+z_2d)\bmod\beta$ be the shift with which we enter $\encodeDeletion$, $\encodeDeletionPrime$, $\encodeZero$, and $\encodeOne$ objects. Then track $\track{(z'+k(3x-2)-10)\bmod\beta}{u}$ is read if and only if $\encodeOne$ is entered with shift $z'$ and the $k+1^{\textrm{th}}$ $u$ from the left is being read. Here $0\leqslant k<\frac{x}{2}-7$ and $z'=(z_1m'+z_2d')\bmod\beta$.
\end{lemma}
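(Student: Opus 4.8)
The statement is a biconditional whose $\Leftarrow$ direction is read directly off Figure~\ref{fig:u track and encodedOne track}~(ii): if $\encodeOne$ is entered with shift $z'$, the leading $b^{10}$ contributes a shift change of $-10$ and each of the $k$ complete blocks $ubb$ lying before the $(k+1)^{\textrm{th}}$ $u$ contributes $3x-2$ (a shift change of $3x$ from $u$ and $-2$ from $bb$), so by Lemma~\ref{lem:shiftChange} and Table~\ref{tab:equalities} that $u$ is entered with shift $(z'+k(3x-2)-10)\bmod\beta$ and the stated track is read. All the work is therefore in $\Rightarrow$, which is an injectivity claim: a $u$ entered with a shift of this form can occur in no reading context except the $(k+1)^{\textrm{th}}$ $u$ of an $\encodeOne$ entered with shift $z'$.

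My plan is to separate reading contexts by their shift modulo $x$. The crucial fact is that every object is entered with a shift divisible by $x$: from Table~\ref{tab:equalities} we have $z_1=x(3x+1)$, $z_2=x(3x-2)$ and $\beta=z_1(3x-2)=x(9x^2-3x-2)$, so $z_1m+z_2d$ is a multiple of $x$, and since $x\mid\beta$ so is $z=(z_1m+z_2d)\bmod\beta$. Because $x\mid\beta$, reducing any $u$-shift modulo $\beta$ leaves its residue modulo $x$ unchanged; hence the residue modulo $x$ of a $u$ read at a fixed position inside a fixed object is \emph{independent} of the enclosing shift $z$, and is obtained from the underscripts in Tables~\ref{tab:Tracks in u for Case 1}--\ref{tab:Tracks in u for Case 2b} simply by deleting every term divisible by $x$ (each $z$, each $3x$, each $\beta$).

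First I would record the residue of the target family. In the $ubb$-separated region of $\encodeOne$ (the case $0\leqslant i<\frac{x}{2}-7$, occurring in both Table~\ref{tab:Tracks in u for Case 1} and Table~\ref{tab:Tracks in u for Case 2b}) the underscript $z+i(3x-2)-10$ reduces to $-2i-10\pmod x$; as $i$ runs over $0\leqslant i<\frac{x}{2}-7$ these are the $\frac{x}{2}-7$ consecutive even residues filling the window $\{6,8,\ldots,x-10\}$ (recall $x$ is even), which are pairwise distinct since the window has width $x-16<x$, so $i$ is recovered from the shift modulo $x$. Next I would reduce every remaining underscript in the three tables and check that none lands in this window: $\encodeDeletion$ gives $-2$; all delete-tracks inside $\encodeZero$ and $\encodeDeletionPrime$ (in both of their cases) give $-4,-6,-8$; and the contiguous region and final block of $\encodeOne$ give $z+3xi-x+4\equiv 4$ and $z+3x^2-x+2\equiv 2$. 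Every non-target context therefore has residue in $\{2,4,x-8,x-6,x-4,x-2\}$, which is disjoint from $\{6,8,\ldots,x-10\}$ precisely because $x\geqslant 16$ (the stated $x>14$ with $x$ even). The special cases with one fewer $b$ (underscripts $\beta-2,\beta-4,\beta-10,\beta-2x-8,\beta-3x+2$) share the residues of their generic counterparts and add nothing new; in particular $\beta-10\equiv -10\pmod x$ is exactly the $i=0$, $z=0$ instance of the target family.

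Given this disjointness the $\Rightarrow$ direction is immediate. If $\track{(z'+k(3x-2)-10)\bmod\beta}{u}$ is read then, using $z'\equiv 0\pmod x$ (valid since $z'=(z_1m'+z_2d')\bmod\beta$), its shift is $\equiv -2k-10\pmod x$, a residue in $\{6,\ldots,x-10\}$; by the previous paragraph the reading context must be the $ubb$-separated region of some $\encodeOne$ at the unique index $i=k$, with the $\encodeOne$ entered with some shift $z''$. Then $(z''+k(3x-2)-10)\equiv(z'+k(3x-2)-10)\pmod\beta$ forces $z''\equiv z'\pmod\beta$, hence $z''=z'$ as both lie in $[0,\beta)$, which is the claimed context. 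The one laborious step is the third paragraph: one must pass row by row through Tables~\ref{tab:Tracks in u for Case 1}--\ref{tab:Tracks in u for Case 2b}, including every special case, and confirm that the residue window $\{6,\ldots,x-10\}$ is met only inside the $ubb$-separated region of $\encodeOne$; the inequality $x\geqslant 16$ is exactly what prevents this window from wrapping into the residues of the other objects.
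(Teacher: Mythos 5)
Your proof is correct and is essentially the paper's own argument: both hinge on the observation that $x$ divides $\beta$, $z_1$ and $z_2$ (hence every object-entry shift $z$), so that the reading context can be identified by comparing the track underscripts modulo $x$, which is exactly the paper's reduction to showing $0\neq(k(3x-2)-10-y)\bmod x$ for each non-target underscript $s=(z+y)\bmod\beta$. Your consolidation of the paper's eight cases into a single disjointness check between the residue window $\{6,8,\ldots,x-10\}$ and the set $\{2,4,x-8,x-6,x-4,x-2\}$ (using $x\geqslant 16$) is just a tidier packaging of the same case analysis, not a different method.
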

\begin{proof}
From Figure~\ref{fig:u track and encodedOne track} (ii) and paragraph 2 of Section~\ref{sec:Reading objects and $u$ subwords}, we know that if an $\encodeOne$ is entered with shift $z'=(z_1m'+z_2d')\bmod\beta$, then the track read in the $k+1^{\textrm{th}}$ $u$ from the left is $\track{(z'+k(3x-2)-10)\bmod\beta}{u}$, where $0\leqslant k<\frac{x}{2}-7$. To complete the proof we show that for any arbitrary track $\track{s}{u}$ that $s\neq (z'+k(3x-2)-10)\bmod\beta$ when $\track{s}{u}$ is \emph{not} read in the $k+1^{\textrm{th}}$ $u$ from the left in an $\encodeOne$ entered with shift $z'$. 
The values for $s$ when entering the objects $\encodeDeletion$, $\encodeDeletionPrime$, $\encodeZero$, and $\encodeOne$ with shift $z$ are given by the underscripts of the $u$ tracks in the middle column of Tables~\ref{tab:Tracks in u for Case 1} to~\ref{tab:Tracks in u for Case 2b}. 
In these tables the value $s$ in $\track{s}{u}$ is of the form $s=(z+y)\bmod\beta$ (for example $s=(z-2)\bmod\beta$ in rows 1 to 3 of Table~\ref{tab:Tracks in u for Case 1}). 
So to show $s\neq (z'+k(3x-2)-10)\bmod\beta$ we prove $((z+y)\bmod\beta) \neq ((z'+k(3x-2)-10)\bmod\beta) $, which we rewrite as $((z-z')\bmod\beta) \neq ((k(3x-2)-10-y)\bmod\beta)$. Because $0=\beta\bmod x$ and $0=(z-z')\bmod x$ for all  $z,z'\in\{(z_1m+z_2d)\bmod\beta\}$ it is sufficient to show that 
\begin{equation}\label{eq:unique tracks in encoded 1}
0\neq(k(3x-2)-10-y)\bmod x 
\end{equation}
In Tables~\ref{tab:Tracks in u for Case 1} to~\ref{tab:Tracks in u for Case 2b} the value $i$ denotes the position of the $u$ word read in an object. For example, $i=0$ is the leftmost $u$ in the object, $i=1$ is the second $u$ from the left and so on. Below we use the value $i$ to give the cases for $\track{s}{u}$ in each object.

Case 1: Reading track $\track{s}{u}$ in $\encodeOne$ when $0\leqslant i<\frac{x}{2}-7$. From Tables~\ref{tab:Tracks in u for Case 1} and~\ref{tab:Tracks in u for Case 2b} we have $s=((z+i(3x-2)-10)\bmod\beta$, and from the previous paragraph we have $y=i(3x-2)-10$. When we substitute this value for $y$ in Equation~\eqref{eq:unique tracks in encoded 1} we get the inequality $0\neq((k-i)(3x-2))\bmod x$ which holds when $k\neq i$, $0\leqslant k<\frac{x}{2}-7$ and $0\leqslant i<\frac{x}{2}-7$. Note that here we do not consider the case $k=i$ as this implies that $z=z'$ which means all the requirements (as given in the lemma statement) for reading $\track{(z'+k(3x-2)-10)\bmod\beta}{u}$ have been met.

Case 2: Reading track $\track{s}{u}$ in $\encodeOne$ when $\frac{x}{2}-7\leqslant i< x$. From Tables~\ref{tab:Tracks in u for Case 1} and~\ref{tab:Tracks in u for Case 2b} we have $s=(z+3xi-x+4)$, and from paragraph 1 of this lemma we have $y=3xi-x+4$. When we substitute this value for $y$ in Equation~\eqref{eq:unique tracks in encoded 1} we get the inequality $0\neq(k(3x-2)-3xi+x-14)\bmod x$ which holds for $0\leqslant k<\frac{x}{2}-7$.

Case 3: Reading track $\track{s}{u}$ in $\encodeOne$ when $i=x$. From Tables~\ref{tab:Tracks in u for Case 1} and~\ref{tab:Tracks in u for Case 2b} we have $s=(z+3x^{2}-x+2)$, and from paragraph 1 of this lemma we have $y=3x^{2}-x+2$.  When we substitute this value for $y$ in Equation~\eqref{eq:unique tracks in encoded 1} we get the inequality $0\neq(k(3x-2)-3x^2+x-12)\bmod x$ which holds for $0\leqslant k<\frac{x}{2}-7$.

Case 4: Reading track $\track{s}{u}$ in $\encodeZero$ or $\encodeDeletionPrime$ when $i=0$. From Tables~\ref{tab:Tracks in u for Case 1} and~\ref{tab:Tracks in u for Case 2a} we have $s=(z-4)$, and from paragraph 1 of this lemma we have $y=-4$. When we substitute this value for $y$ in Equation~\eqref{eq:unique tracks in encoded 1} we get the inequality $0\neq(k(3x-2)-6)\bmod x$ which holds for $0\leqslant k<\frac{x}{2}-7$.

Case 5: Reading track $\track{s}{u}$ in $\encodeZero$ when $1\leqslant i<x$ or in $\encodeDeletionPrime$ when $1\leqslant i<x-1$. From Tables~\ref{tab:Tracks in u for Case 1} and~\ref{tab:Tracks in u for Case 2a} we have we have $s=(z+3xi-6)$, and from paragraph 1 of this lemma we have $y=3xi-6$. When we substitute this value for $y$ in Equation~\eqref{eq:unique tracks in encoded 1} we get the inequality $0\neq(k(3x-2)-3xi-4)\bmod x$ which holds for $0\leqslant k<\frac{x}{2}-7$.

Case 6: Reading track $\track{s}{u}$ in $\encodeZero$ when $i=x$. From Tables~\ref{tab:Tracks in u for Case 1} and~\ref{tab:Tracks in u for Case 2a} we have we have $s=(z+3x^2-8)$, and from paragraph 1 of this lemma we have $y=3x^2-8$. When we substitute this value for $y$ in Equation~\eqref{eq:unique tracks in encoded 1} we get the inequality $0\neq(k(3x-2)-3x^2-2)\bmod x$ which holds for $0\leqslant k<\frac{x}{2}-7$.

Case 7: Reading track $\track{s}{u}$ in $\encodeDeletionPrime$ when $i=x-1$. From Tables~\ref{tab:Tracks in u for Case 1} and~\ref{tab:Tracks in u for Case 2a} we have we have $s=(z+3x(x-1)-8)$, and from paragraph 1 of this lemma we have $y=3x(x-1)-8$. When we substitute this value for $y$ in Equation~\eqref{eq:unique tracks in encoded 1} we get the inequality $0\neq(k(3x-2)-3x(x-1)-2)\bmod x$ which holds for $0\leqslant k<\frac{x}{2}-7$.

Case 8: Reading track $\track{s}{u}$ in $\encodeDeletion$ when $i=0$. From Table~\ref{tab:Tracks in u for Case 1} we have we have $s=(z-2)$, and from paragraph 1 of this lemma we have $y=-2$. When we substitute this value for $y$ in Equation~\eqref{eq:unique tracks in encoded 1} we get the inequality $0\neq(k(3x-2)-8)\bmod x$ which holds for $0\leqslant k<\frac{x}{2}-7$.
\end{proof}

The following lemma shows that each shift $z=(z_1m+z_2d)\bmod\beta$ encodes one and only one appendant $\alpha_{m}$.
The variables in the Lemma statement are from Table~\ref{tab:equalities}.
\begin{lemma}\label{lem:Encode appendant}
For each pair $z=((z_1m+z_2d)\bmod\beta)$ and $z'=((z_1m'+z_2d')\bmod\beta)$, if $m\neq m'$ then $z\neq z'$.
\end{lemma}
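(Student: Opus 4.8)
The plan is to prove the contrapositive in the equivalent form: if $z=z'$ then $m=m'$. Expanding the definitions from Table~\ref{tab:equalities}, the hypothesis $z=z'$ means $z_1 m + z_2 d \equiv z_1 m' + z_2 d' \pmod{\beta}$, which I would immediately rearrange as
\[
z_1(m-m') \equiv z_2(d'-d) \pmod{\beta}.
\]
The whole argument rests on the fact that $z_1$, $z_2$, and $\beta$ share a clean common factorisation. From Table~\ref{tab:equalities} one has $z_1 = 3x^2+x = x(3x+1)$ and $z_2 = 3x^2-2x = x(3x-2)$, and since $\beta = z_1(3x-2)$ this gives $\beta = x(3x+1)(3x-2)$.

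First I would cancel the common factor $x>0$ from the displayed congruence. This is legitimate because $x$ divides each of $z_1(m-m')$, $z_2(d'-d)$, and $\beta$, so the standard rule $A\equiv B \pmod{n} \Rightarrow A/x \equiv B/x \pmod{n/x}$ (valid when $x$ divides $A$, $B$, and $n$) applies, yielding
\[
(3x+1)(m-m') \equiv (3x-2)(d'-d) \pmod{(3x+1)(3x-2)}.
\]
Next I would reduce this congruence modulo $3x-2$, which annihilates the right-hand side and leaves $(3x+1)(m-m') \equiv 0 \pmod{3x-2}$. Since any common divisor of $3x+1$ and $3x-2$ must divide their difference $3$, and $3x-2\equiv 1 \pmod 3$, the two factors are coprime; hence I may cancel $3x+1$ to conclude $3x-2 \mid (m-m')$.

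The final step is an elementary size bound. From Table~\ref{tab:equalities} we have $0\leqslant m,m'<3x-2$, so $|m-m'|\leqslant 3x-3 < 3x-2$, and the only multiple of $3x-2$ in that range is $0$. Thus $m=m'$, contradicting $m\neq m'$ and completing the proof. I expect the only genuinely delicate points to be justifying the cancellation of $x$ as a modulus-respecting operation and verifying the coprimality $\gcd(3x+1,3x-2)=1$; once these are in place the conclusion is forced by the bound on $|m-m'|$. I note in passing that the symmetric reduction modulo $3x+1$ gives $3x+1\mid(d'-d)$, and since $|d'-d|\leqslant 3x<3x+1$ this would force $d=d'$ as well, although this stronger fact is not required for the statement as given.
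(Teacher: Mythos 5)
Your proof is correct, but it is organised differently from the paper's. Both arguments ultimately rest on the same arithmetic facts --- the factorisation $\beta=z_1(3x-2)=z_2(3x+1)=x(3x+1)(3x-2)$ and the coprimality of $3x+1$ and $3x-2$ --- but the paper works with explicit integer equalities rather than congruences: it first uses the bound $z_1m+z_2d<2\beta$ to split $z=z'$ into the two cases $z_1m+z_2d=z_1m'+z_2d'$ and $z_1m+z_2d=z_1m'+z_2d'-\beta$, rewrites each as $z_2(d-d')=z_1(m'-m)$ or $z_2(d-d')=z_1(m'-m-3x+2)$, argues that $m\neq m'$ forces $d\neq d'$ in both cases, and then obtains a contradiction because the right-hand sides are divisible by $3x+1$ (as $z_1=x(3x+1)$) while $z_2(d-d')$ is not when $0<|d-d'|<3x+1$. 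In effect the paper reduces modulo $3x+1$ and derives the contradiction through the $d$-terms, whereas you cancel the common factor $x$ from the modulus and reduce modulo $3x-2$, which annihilates the $d$-terms outright and isolates $(3x+1)(m-m')\equiv 0\pmod{3x-2}$, so that the size bound $|m-m'|<3x-2$ finishes the argument directly. Your route buys a cleaner structure: no case split on whether the unreduced values differ by $0$ or $\beta$, no need to first establish $d\neq d'$, and, as you note, the symmetric reduction modulo $3x+1$ gives the companion fact $d=d'$ for free. The paper's version has the minor virtue of staying entirely within elementary integer identities, but your modulus-cancellation step (justified exactly as you state it) is standard and sound, so nothing is lost.
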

\begin{proof}
From the values in Table~\ref{tab:equalities}, we get $z_2(3x+1)=z_1(3x-2)=\beta$. 
Note that $0\leqslant m< 3x-2$ and $0\leqslant d< 3x+1$,  and so we have $z_1m+z_2d<z_1(3x-2)+z_2(3x+1)=2\beta$ (and similarly  $z_1m'+z_2d'<2\beta$). 
So if $z=z'$, then either $z_1m+z_2d=z_1m'+z_2d'$ or $z_1m+z_2d=z_1m'+z_2d'-\beta$ (here we can assume $z<z'$ as the argument is the same for $z'<z$).
We rewrite these case as $z_2(d-d')=z_1(m'-m)$ and $z_2(d-d')=z_1(m'-m-3x+2)$. 
Note that $z_1(m'-m-3x+2)\neq 0$, and since $m\neq m'$ we also have $z_1(m'-m)\neq 0$, which means that for both cases $d\neq d'$. 
From the values in Table~\ref{tab:equalities} we have $z_1=x(3x+1)$, and since $z_2$ and $3x+1$ are relatively prime we get $0\neq (z_2(d-d')\bmod (3x+1))$ for $d\neq d'$, $0\leqslant d< 3x+1$ and $0\leqslant d'< 3x+1$. It follows that the above equalities do not hold since $z_1(m'-m)$ and $z_1(m'-m-3x+2)$ are divisible by $3x+1$ and $z_2(d-d')$ is not, and thus $z\neq z'$.
\end{proof}

\subsection{The halting problem for binary tag systems}

\begin{corollary}\label{cor:2 symbol tag halting problem}
The halting problem for binary tag systems is undecidable.
\end{corollary}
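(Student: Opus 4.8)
The plan is to reduce the halting problem for Turing machines, which is undecidable, to the halting problem for binary tag systems, using the two simulation results already in hand. Given a Turing machine $M$ with an input, Theorem~\ref{thm:CTS simulate TMs} yields a cyclic tag system $\CTS$ that simulates $M$; by padding $\CTS$ with redundant copies of its own program (as in Section~\ref{sec:Cyclic tag system C'}) I may assume its length has the form $3k+2$, so that Theorem~\ref{thm:main theorem} applies and produces a binary tag system $\TSTC$ simulating $\CTS$. Running $\TSTC$ on the encoded input $\tne{w_1}\ldots\tne{w_n}$ of Definition~\ref{def:input encoding} and invoking Lemma~\ref{lem:simulating a computation step of C} repeatedly, $\TSTC$ faithfully tracks every computation step of $\CTS$, and hence of $M$. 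The reduction is clearly computable, so undecidability of $M$'s halting would transfer to $\TSTC$ provided the simulation is made to preserve halting.

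The crux of the argument is exactly this last proviso: faithful step-by-step simulation is not by itself enough, and I must arrange that $\TSTC$ reaches a halting configuration (a dataword of length $<\beta$) precisely when $M$ halts. This is not automatic, because the garbage objects $\encodeDeletion$ and $\encodeDeletionPrime$ reproduce themselves (Figure~\ref{fig:reading objects when number of symbols read is upperbound}~(iii),(iv)), so the unmodified $\TSTC$ keeps running even after $\CTS$ has completed. I would therefore first ensure that $M$ halts if and only if $\CTS$ completes by reaching the \emph{empty} dataword, with a non-halting $M$ giving a $\CTS$ whose dataword never empties; this is the standard behaviour of the construction underlying Theorem~\ref{thm:CTS simulate TMs}, or can be arranged by a trivial modification. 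I would then equip $\TSTC$ with a halting mechanism triggered by this event.

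The mechanism I would use exploits the asymmetry of the two rules. The rule $c\rightarrow u$ lengthens the dataword, whereas $b\rightarrow b$ deletes $\beta$ symbols while appending a single symbol, shrinking the dataword by $\beta-1$ at every step. Hence a dataword consisting only of $b$ symbols shrinks monotonically and halts once its length drops below $\beta$, while any dataword still containing $c$ symbols keeps regenerating $u$ subwords (and thus more $c$ symbols) and cannot be collapsed in this way. The idea is to reserve one shift value, equivalently one appendant index of $\CTSp$, to serve as a halt signal, to enter it precisely when the simulation reaches the empty $\CTS$ dataword, and to define the corresponding $u$ tracks so that under this shift every object read, including the garbage objects that then fill the dataword, appends only blocks of $b$ symbols. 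Since $\encodeDeletion$ has shift change $0$ (Table~\ref{tab:equalities}), the halt shift is preserved as the remaining garbage is consumed, so the whole dataword is progressively rewritten into $b$ symbols and then driven below length $\beta$.

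The step I expect to be the main obstacle is verifying that this halt signal, once triggered, cleanly removes all remaining $c$ symbols without interfering with the tracks of Tables~\ref{tab:Tracks in u for Case 1}--\ref{tab:Tracks in u for Case 2b} used during ordinary simulation, and without ever being triggered while $M$ is still running. Concretely, one must check that the reserved shift is distinct, in the sense of Lemma~\ref{lem:Encode appendant} and Lemma~\ref{lem: ui words with unique shift values}, from every shift occurring during a live computation, and that its interaction with the shift change $z_2$ of $\encodeDeletionPrime$ is controlled so that the collapse to $b$ symbols is not disrupted. Granting this, one obtains that $\TSTC$ halts if and only if $\CTS$ empties if and only if $M$ halts, so any decision procedure for the halting problem of binary tag systems would decide the halting problem for $M$, proving the corollary.
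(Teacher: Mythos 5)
Your overall architecture (Turing machine $\to$ cyclic tag system $\to$ $\TSTC$, plus a mechanism that collapses the dataword to all $b$ symbols so that the rule $b\rightarrow b$ shrinks it below $\beta$) matches the paper, and you correctly isolate the real difficulty: the garbage objects keep $\TSTC$ running forever unless a halt signal is injected. But the two concrete choices you make for the halt signal both fail. First, your trigger event --- ``the simulated $\CTS$ dataword becomes empty'' --- is not detectable by $\TSTC$'s control flow. The shift value encodes only the currently marked appendant; once the last $\encodeZero$ or $\encodeOne$ is consumed, the remaining dataword is pure garbage, and by Figures~\ref{fig:reading objects when number of symbols read is upperbound}~(iii),(iv) and~\ref{fig:reading objects when number of symbols read is lowerbound}~(iii) garbage objects append only more garbage without ever altering the encoded appendant, so nothing in the system can ``notice'' that the simulated dataword has emptied. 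Second, reserving ``one appendant index of $\CTSp$'' as the halt shift cannot work: the marker cycles through \emph{every} appendant index once per $3x-2$ symbols read, so any shift of the form $(z_1m+z_2d)\bmod\beta$ is necessarily revisited during a live, non-halting computation, exactly the interference you flag as ``the main obstacle'' but do not resolve.

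The paper's proof avoids both problems at once. The trigger is not the empty dataword but the special appendant $\alpha_h$ of the cyclic tag system from~\cite{NearyWoods2006C}, which is appended if and only if the simulated Turing machine halts; and the reserved shifts are not an appendant index but the entire set of \emph{odd} shift values, which are provably never entered during normal simulation (all quantities $z_1$, $z_2$ and the offsets in Tables~\ref{tab:Tracks in u for Case 1}--\ref{tab:Tracks in u for Case 2b} are even since $x$ is even). One sets all odd-shift tracks of $u$ to all-$b$ sequences, and replaces $\alpha_h$ by an object of \emph{odd length}, so that its shift change flips the parity and every subsequent $u$ is entered at an odd shift; one round then rewrites the whole dataword into $b$ symbols, after which $b\rightarrow b$ drives the length below $\beta$. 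To repair your argument you would need to replace your trigger with something the shift can actually see (a halting appendant) and your reserved shift with a set of values disjoint from all shifts arising in Lemma~\ref{lem:Encode appendant}, which is precisely the parity trick.
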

\begin{proof}
In Theorem~\ref{thm:main theorem} the $u$ tracks at odd valued shifts are never read by $\TSTC$. So setting $u$ tracks at odd shifts to be sequences of all $b$ symbols causes no change in the simulation algorithm. $\TSTC$ simulates the cyclic tag system in~\cite{NearyWoods2006C} which has a special appendant $\alpha_h$ that is appended if and only if the Turing machine it simulates is halting. 
We can alter $\TSTC$ so that instead of appending $\alpha_h$ when $C$ halts, it appends an object of odd length so that all subsequent $u$ subwords are entered with an odd shift. This means that a single round on the tag system dataword changes everything to $b$ symbols. 
Now the rule $b\rightarrow b$, which appends one $b$ and deletes $\beta$ symbols, is repeated until number of symbols is $<\beta$ and the computation halts. So the computation halts if and only if the cyclic tag system is simulating a halting Turing machine.
\end{proof}

\section{The Post correspondence problem for 4 pairs of words}
In Theorem~\ref{thm:PCP 4 is undecidable} we show that the Post correspondence problem is undecidable for 4 pairs of words. Theorem~\ref{thm:PCP 4 is undecidable} is proved by reducing the halting problem for the binary tag system given in Lemma~\ref{lem:binary tag system for PCP} to the Post correspondence problem. The halting problem for the binary tag system in Lemma~\ref{lem:binary tag system for PCP} is proved undecidability by simulating the cyclic tag system given in Lemma~\ref{lem:CTS with input dataword 1}. 

\begin{definition}[Post correspondence problem]\label{def:PCP}
Given a set of pairs of words $\{(r_i,v_i)| r_i,v_i\in\Sigma^\ast, 0\leqslant i\leqslant n\}$ where $\Sigma$ is a finite alphabet, determine whether or not there is a non-empty sequence $r_{i_1}r_{i_2}\ldots r_{i_l}=v_{i_1}v_{i_2}\ldots v_{i_l}$. 
\end{definition}

\begin{lemma}\label{lem:CTS with input dataword 1}
Let $\CTS=\alpha_0,\alpha_1\ldots,\alpha_{p-1}$ be a cyclic tag system and let $w$ be an input dataword to $\CTS$. Then there is a cyclic tag system $\CTS_w$ that takes a single 1 as its input and simulates the computation of $\CTS$ on $w$.  
\end{lemma}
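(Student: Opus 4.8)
The plan is to build $\CTS_w$ by prefixing a short \emph{loader} program in front of a (padded) copy of $\CTS$, so that when $\CTS_w$ is started on the single dataword $1$, the loader first writes $w$ onto the dataword and leaves the marker pointing at the copy of $\alpha_0$; from that configuration onward the marked appendant cycles through $\CTS$, so $\CTS_w$ continues by simulating $\CTS$ on $w$. Throughout, write $w=w_1w_2\ldots w_n$ with $w_i\in\{0,1\}$.

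Concretely, I would let the first loader appendant be $1^{n}$ and let the next $n$ loader appendants be the single symbols $w_1,\ldots,w_n$. Tracing the Definition~\ref{def:Computation CTS} steps from the dataword $1$: reading the initial $1$ appends $1^{n}$, after which the dataword is $1^{n}$ and the marker has advanced one place; reading these $n$ ones in turn then appends $w_1,w_2,\ldots,w_n$, so the dataword becomes exactly $w$ while the marker advances to the appendant immediately following the loader. Placing $\alpha_0,\alpha_1,\ldots$ there means the marker now sits at $\alpha_0$ with dataword $w$, which is precisely the start configuration of $\CTS$ on $w$. Using the copy construction from the end of Section~\ref{sec:Cyclic tag systems}, I would pad the $\CTS$-part to a convenient multiple of $p$ so that this part simulates $\CTS$ step for step.

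The main obstacle is that the marker of a cyclic tag system cycles through the \emph{entire} program forever, so during the (possibly unbounded) simulation phase the marker will repeatedly re-enter the loader region, where the loader appendants differ from those of $\CTS$ and would corrupt the computation; the loader cannot simply be placed once and forgotten. I expect this to be the crux, and I would resolve it with the standard dummy-symbol spacing: interleave the appendants of $\CTS$ with dummy appendants and encode each data bit of $\CTS$ together with enough $0$-padding, arranging the periods so that throughout the simulation phase a $1$ is read only when the marker is at a genuine $\CTS$ appendant, while every loader position is visited only when the leading symbol is a $0$. Since a leading $0$ is deleted and merely advances the marker irrespective of which appendant is marked (Definition~\ref{def:Computation CTS}), the loader is then inert on every re-entry, and it need only emit the $0$-padded encoding of $w$ rather than $w$ itself. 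With this encoding in place, a routine induction on computation steps shows that $\CTS_w$ started on $1$ faithfully tracks $\CTS$ on $w$, and in particular completes its computation (halts, or enters a repeating sequence of configurations) exactly when $\CTS$ on $w$ does.
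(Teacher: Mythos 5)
Your construction is sound and rests on the same two ideas as the paper's proof --- a prepended ``loader'' that writes $w$ onto the dataword when the initial $1$ is read, and $0$-padding of the data so that the loader is inert whenever the marker revisits it --- but your loader is heavier than it needs to be, and that is exactly where your remaining work lies. Because your loader occupies $n{+}1$ consecutive program positions, it meets every residue class modulo the padding period unless that period exceeds $n{+}1$; so the ``standard dummy-symbol spacing'' you invoke must use an input-dependent period of at least $n{+}2$ (each data bit followed by $n{+}1$ zeros), and the total program length must be a suitable multiple of $p$ times that period so that after wraparound the genuine appendants are still met in the correct cyclic order. All of this can be made to work, but it is precisely the arithmetic you leave at ``arranging the periods so that\ldots''. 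The paper dissolves the difficulty by collapsing the loader to the single appendant $\tne{w}=w_10w_20\ldots w_n0$ placed at position $0$ of the program $\tne{w},\tne{\alpha_0},\epsilon,\tne{\alpha_1},\epsilon,\ldots,\epsilon,\tne{\alpha_{p-1}}$: with period-$2$ padding and a program of even length $2p$, every $w_i$ is read at an odd position (a genuine $\tne{\alpha_j}$) and every even position --- including the loader at position $0$ --- only ever sees a padding $0$ and hence merely deletes it, so the alignment you still have to engineer becomes automatic.
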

\begin{proof}
The binary dataword $w=w_1w_2w_3\ldots w_n$ is encoded as $\tne{w}= w_10w_20w_30\ldots w_n0$, and each binary appendant $\alpha_m=\sigma_1\sigma_2\sigma_3\ldots \sigma_m$ in $C$ is encoded as $\tne{\alpha_m}=\sigma_10\sigma_20\sigma_30\ldots \sigma_m 0$. The program for $\CTS_w$ is defined by the equation
\begin{equation*}\label{eq:CTS that takes 1 as input}
\CTS_w=\tne{w},\tne{\alpha_0},\epsilon,\tne{\alpha_1},\epsilon,\tne{\alpha_2},\epsilon,\tne{\alpha_3}\ldots,\epsilon\tne{\alpha_{p-1}}
\end{equation*}
where $\epsilon$ is the empty word and $\tne{\alpha_i}$ is defined above. The configuration for $\CTS_w$ at the start of the computation is given by
\begin{xalignat*}{2}
\pmb{\tne{w}},\tne{\alpha_0},\epsilon,\tne{\alpha_1},\epsilon,\tne{\alpha_2},\ldots,\epsilon\tne{\alpha_{p-1}}& & &1
\end{xalignat*}
where the program is given on the left with the marked appendant $\tne{w}$ in bold, and the input dataword is a single 1 and is given on the right. After the first computation step we have
\begin{xalignat*}{2}
\tne{w},\pmb{\tne{\alpha_0}},\epsilon,\tne{\alpha_1},\epsilon,\tne{\alpha_2},\ldots,\epsilon\tne{\alpha_{p-1}}& & &w_10w_20w_30\ldots w_n0
\end{xalignat*}
In the configuration above the encoding $\tne{w}=w_10w_20w_30\ldots w_n0$ of $w$ has been appended. Now the simulation of the first computation step of $C$ on $w$ begins. Every second appendant in $\CTS_w$ is an $\tne{\alpha_j}$ appendant and every second symbol in the dataword of $\CTS_w$ is a $w_i$ symbol. So $\CTS_w$ on the input dataword 1 simulates the computation of $\CTS$ on $w$.
\end{proof}

\begin{lemma}\label{lem:binary tag system for PCP}
The halting problem is undecidable for binary tag systems with deletion number $\beta$, alphabet $\{b,c\}$ and rules of the form $b\rightarrow b$ and $c\rightarrow u_1\ldots u_lb$ ($u_i\in\{b,c\}$), when given $u_{\beta}u_{\beta+1}\ldots u_lb$ as input.
\end{lemma}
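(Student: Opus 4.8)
The plan is to obtain the required tag system as a normal form of the universal binary tag system $\TSTC$ of Theorem~\ref{thm:main theorem}, whose halting problem is already undecidable by Corollary~\ref{cor:2 symbol tag halting problem}. First I would push all the computational content into the \emph{program} rather than the input word: starting from a Turing machine $M$, Theorem~\ref{thm:CTS simulate TMs} gives a cyclic tag system simulating $M$, and Lemma~\ref{lem:CTS with input dataword 1} converts it into a cyclic tag system $\CTS$ that takes the single symbol $1$ as its input and halts if and only if $M$ halts (after the routine padding needed to put its program length into the form $3k+2$ demanded by Theorem~\ref{thm:main theorem}). Feeding $\CTS$ to Theorem~\ref{thm:main theorem} and Corollary~\ref{cor:2 symbol tag halting problem} yields a binary tag system with rules $b\rightarrow b$ and $c\rightarrow u$ whose \emph{natural} input is the single object $\encodeOne=\encodeOneSequence$, the encoding of the cyclic dataword $1$, and which halts iff $M$ halts. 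This is the system I would massage into the two required normal forms.

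Second, I would normalise the $c$-rule so that its appended word ends in $b$. The proof of Corollary~\ref{cor:2 symbol tag halting problem} already observes that, because $x$ is even, every shift $z=(z_1m+z_2d)\bmod\beta$ that is actually used is even, so all tracks of $u$ read at odd shifts are free and may be set to blocks of $b$ symbols. Using this freedom I would arrange that the last symbol of $u$ is $b$, so that the rule takes the stated form $c\rightarrow u_1\ldots u_l b$; equivalently one notes that every object appended during the simulation ends in a block of $b$ symbols (for instance $\encodeOne$ ends in $b^{x+2}$), which already forces the final symbol of $u$ to be $b$.

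Third, I would normalise the input to the single-$c$ form. The word $u_\beta u_{\beta+1}\ldots u_l b$ is precisely the configuration obtained from the one-symbol dataword $c$ by appending $u_1\ldots u_l b$ and then deleting the leading $\beta$ symbols $c\,u_1\ldots u_{\beta-1}$; thus replacing the natural input $\encodeOne$ by $u_\beta\ldots u_l b$ amounts to launching the machine from a single $c$. The point I would verify is that reading $u_\beta\ldots u_l b$ on shift $0$ reads the same symbols as the track of $u$ that appends $\encodeOne$ (from Figure~\ref{fig:tracks for appending each object}), save for one leading $b$, giving the harmless missing-$b$ track $\encodeOneTrackSpecialCaseA$ of the kind already catalogued for the special cases. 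After this first step the dataword begins with $\encodeOne$, and the simulation of Theorem~\ref{thm:main theorem} and Lemma~\ref{lem:simulating a computation step of C} then proceeds unchanged, driving $\CTS$ on input $1$; hence the system halts iff $M$ halts and undecidability follows.

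The main obstacle is exactly this bootstrap step, where I must reconcile the shift arithmetic. Since $u_\beta\ldots u_l b$ has length $|u|-\beta+1=3x\beta-(3x-1)$, it has shift change $3x-1$, so a naive reading would enter the freshly appended $\encodeOne$ on shift $3x-1$, which is \emph{not} of the admissible form $(z_1m+z_2d)\bmod\beta$ encoding the appendant $\alpha_0$. Resolving this requires pinning down the initial shift exactly, by tuning the trailing $b$-block of the input (as in the $b^{x+1}$ and $b^{9}$ variants already used for the special cases) and by dedicating an otherwise-unused track of $u$ to emit $\encodeOne$ so that the first genuine object is entered with shift $z=(z_1\cdot 0+z_2\cdot 0)\bmod\beta=0$, and by confirming that this assignment does not conflict with any track already fixed in Tables~\ref{tab:Tracks in u for Case 1} to~\ref{tab:Tracks in u for Case 2b}. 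Alongside this I would check the dual safety condition, that the single-$c$ bootstrap never accidentally produces the odd-length object used by the halting gadget of Corollary~\ref{cor:2 symbol tag halting problem}, so that the only way the system halts is by faithfully simulating a halting computation of $M$.
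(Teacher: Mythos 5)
Your high-level strategy coincides with the paper's: reuse $\TSTC$, exploit the never-read track $\track{0}{u}$ to bootstrap from an input that is the $c$-rule appendant with its first $\beta$ symbols removed, and finish with the halting gadget of Corollary~\ref{cor:2 symbol tag halting problem}. But the step you yourself flag as ``the main obstacle'' is exactly where your proposed fixes fail, and it is the substance of the paper's proof. The input word is forced by the lemma statement to be $u_\beta\ldots u_l b$, of length $|u|-\beta+1$, hence with shift change $3x-1$; you cannot ``tune the trailing $b$-block'' to make this $0$, because lengthening or shortening $u$ destroys its calibrated shift change of $3x$ on which every object's $b$-padding depends. Likewise, your suggestion to read the missing-$b$ track $\encodeOneTrackSpecialCaseA$ does not work here: in the catalogued special cases the missing $b$ is supplied by an adjacent $b$-block in the surrounding object, but at bootstrap time there is nothing adjacent, so what actually gets appended is an object of length $|\encodeOne|-1$. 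Since $z_1=x(3x+1)$ is even (as $x$ is even), that object's shift change $z_1+1$ is odd, which drives all subsequent $u$ subwords onto odd tracks --- precisely the all-$b$ tracks reserved for the halting gadget --- so the system would halt unconditionally.

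The paper resolves this with three interlocking moves that your proposal is missing. First, it replaces $c\rightarrow u$ by $c\rightarrow u'$ with $u'=u_1\ldots u_l b$ (deleting the never-read $u_0$ and restoring the length with a trailing $b$), so every track of $u$ reappears in $u'$ at a shift one lower; the entry shift $3x-1$ is thereby reinterpreted as $3x=(z_1\cdot 1+z_2\cdot 3x)\bmod\beta$, an \emph{admissible} shift encoding $m=1$, and a one-step rotation of the cyclic program makes $m=1$ point at the right appendant. Second, the bootstrap track appends not $\encodeOne$ minus a leading $b$ but $\encodeOne'$, which is $\encodeOne$ minus an entire $u$ from the block $u^{\frac{x}{2}+7}$ --- a $3x$-symbol track whose omission only drops one garbage object and whose length defect $|u|$ keeps the shift change even, namely $z_2$. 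Third, because $\encodeOne'$ has shift change $z_2$ rather than $z_1$ (i.e.\ it does not advance the simulated marker), the paper prepends an encoded $0$ to the appendant $w$ so that reading that extra $\tne{0}$ supplies the missing marker advance. Without these three repairs --- or some equivalent --- the construction you describe does not produce a correctly initialised simulation, so the proof as proposed has a genuine gap at its central step.
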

\begin{proof}
We use the tag system $\TSTC$ from Theorem~\ref{thm:main theorem} to construct a 2-symbol tag system $\TSTCp$ of the type mentioned in the lemma statement. From Lemma~\ref{lem:CTS with input dataword 1}, we can assume without loss of generality that $\TSTC$ simulates cyclic tag systems whose input is a single 1.

Recall that $\TSTC$ has rules of the form $b\rightarrow b$ and $c\rightarrow u$ (where $u=u_0\ldots u_l\in\{b,c\}^\ast$), and a deletion number $\beta$. 
In $\TSTC$ track $\track{0}{u}=u_{0}u_{\beta}u_{2\beta}\ldots u_{3x\beta}$ is never read, and so we can define $\track{0}{u}$ such that reading $u_{\beta}u_{2\beta}\ldots u_{3x\beta}$ appends the word $\encodeOne'=b^{10}(ubb)^{\frac{x}{2}-7}u^{\frac{x}{2}+6}b^{2}ub^{x+2}$. 
Note that $\encodeOne'$ is obtained from $\encodeOne=\encodeOneSequence$ by removing a single $u$ form the subword $u^{\frac{x}{2}+7}$. 
Because each $u$ in the subword $u^{\frac{x}{2}+7}$ appends a garbage object that has no effect on the computation, reading an $\encodeOne'$ simulates reading an $\encodeOne$. 
So given the input dataword $u_{\beta}u_{\beta+1}\ldots u_lb$ the sequence $u_{\beta}u_{2\beta}\ldots u_{3x\beta}$ is read appending $\encodeOne'$ and the simulation of $\CTS_w$ on input 1 is ready to begin.

Now we replace the rule $c\rightarrow u$ in $\TSTC$ with the rule $c\rightarrow u'$ (where $u'=u_1\ldots u_lb$)
The new track that we defined above for appending $\encodeOne'$ is $\track{0}{u}$ with its first symbol $u_{0}$ deleted and for this reason we can delete $u_0$ from $u$ to give $u'$ as it is never read.
The extra $b$ added at the right end of $u'$ means that $|u'|=|u|$. 
Following the replacement of $c\rightarrow u$ with $c\rightarrow u'$, we make a few minor changes which we detail below so that the simulation of $\CTS_w$ on input 1 proceeds correctly.

The shift change from reading the input word $u_{\beta}u_{\beta+1}\ldots u_lb$ is $3x-1$ (as $|u_{\beta}u_{\beta+1}\ldots u_lb|=|u|-\beta+1$ and the shift change for $u$ is $3x$). 
Recall that reading $u_{\beta}u_{\beta+1}\ldots u_lb$ appends $\encodeOne'$, so after reading the input $u_{\beta}u_{\beta+1}\ldots u_lb$ we enter $\encodeOne'$ with the shift value $3x-1$.
Above to get $u'$ we deleted the leftmost symbol $u_{0}$ from $u$ and so every track in $u$ is shifted one symbol to the left in $u'$. A shift of $-1$ corrects for this and so the shift change of $3x-1$ is in fact equivalent to a shift change of $3x$.
A shift of $3x$ simulates that the marked appendant is at $\alpha_1$ instead of $\alpha_0$. 
To see this note from Table~\ref{tab:equalities} that $3x=((z_1m+z_2d)\bmod\beta)$ for $d=3x$ and $m=1$, which encodes that $\alpha_{m}=\alpha_{1}$ is the marked appendant. 
If we alter the simulated cyclic tag system by taking the last appendant in the program and placing it at the start of the list of appendants, then every appendant get shifted one place to the right in the circular program. 
Now when we enter $\encodeOne'$ with shift $3x$, we are simulating the marker at the correct encoded appendant. 

There is another problem to overcome. 
The object $\encodeOne'$ has one less $u$ than $\encodeOne$ and so has a different shift change to $\encodeOne$.
From the values in Table~\ref{tab:equalities}, the object $\encodeOne'=b^{10}(ubb)^{\frac{x}{2}-7}u^{\frac{x}{2}+6}b^{2}ub^{x+2}$ has a shift change of $z_2=3x^2-x$, and such a shift change value simulates no change in the marked appended (see the end of Section~\ref{sec:Encoding the marked appendant}). 
Recall that we are simulating $\CTS_w$, and so from Lemma~\ref{lem:CTS with input dataword 1}, when we read the encoded $\encodeOne'$ we append the encoding of the dataword $w$. 
If we change the dataword so that we encode the word $0w$ (instead of $w$), then the extra encoded $0$ is read before we be read the encoding of $w$. 
The shift change caused by the extra encoded $0$ simulates the marker moving to the next appendant so that we enter the encoding of $w$ with the correct shift. 
Now that we have successfully appended the encoding of $w$ the remainder of the computation of $\TSTC$ is simulated step for step. 

Finally, the technique from  Corollary~\ref{cor:2 symbol tag halting problem} can be use to modify the above system so that it halts if an only if it is simulating a halting Turing machine. Note that because the tracks from $u$ are shifted one place to the left in $u'$ when we apply the technique from  Corollary~\ref{cor:2 symbol tag halting problem} we set the even tracks (instead of the odd tracks) in $u'$ to be sequences of all $b$ symbols. This completes our construction of $\TSTCp$.
\end{proof}

\begin{theorem}\label{thm:PCP 4 is undecidable}
The Post correspondence problem is undecidable for 4 pairs of words. 
\end{theorem}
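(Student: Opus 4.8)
The plan is to reduce the halting problem for the binary tag system $\TSTCp$ of Lemma~\ref{lem:binary tag system for PCP} to the Post correspondence problem, producing an instance with only four word pairs that has a solution if and only if $\TSTCp$ halts. Since that halting problem is undecidable, four-pair PCP is then undecidable. I would use a \emph{history encoding} in the spirit of Post's original reduction: a candidate solution, read as a single long word, should spell out the entire sequence of datawords $w^{(0)}\vdash w^{(1)}\vdash w^{(2)}\vdash\cdots$ produced by $\TSTCp$, with a separator symbol delimiting successive datawords and with the bottom string running one configuration ahead of the top string, so that the matching requirement forces each $w^{(i+1)}$ to be the genuine $\TSTCp$-successor of $w^{(i)}$.

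The count of four pairs is meant to come directly from the rigid shape of $\TSTCp$: it has exactly two symbols, $b$ and $c$, and exactly two rules, the trivial $b\rightarrow b$ and the single substantive rule $c\rightarrow u_1\cdots u_l b$. I would introduce one pair driven by $b$, one pair driven by $c$, one start pair, and one halting pair. The $b$-pair encodes ``read $b$, delete $\beta$ symbols, append $b$'' and the $c$-pair encodes ``read $c$, delete $\beta$ symbols, append $u_1\cdots u_l b$''. The $\beta$-fold deletion and the two different append lengths would be absorbed into the fixed length offset that each simulated step introduces between the two strings; here the tailored form of the input word $u_\beta u_{\beta+1}\cdots u_l b$, which is precisely the $c$-rule's right-hand side with its first $\beta-1$ symbols deleted, is exactly what makes this length arithmetic close up with no auxiliary pairs. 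The start pair seeds the computation by committing $u_\beta u_{\beta+1}\cdots u_l b$ on the leading (bottom) side, and the halting pair lets the top string catch up and equalize with the bottom exactly when the dataword shrinks below length $\beta$. This last step is clean because, by Corollary~\ref{cor:2 symbol tag halting problem} and the construction in Lemma~\ref{lem:binary tag system for PCP}, $\TSTCp$ halts in a controlled way: once the simulated machine halts, a round turns the dataword into a block of $b$ symbols, which the rule $b\rightarrow b$ then shrinks until $|w|<\beta$, so the halting pair is triggered unambiguously.

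Completeness --- that a halting computation yields a PCP solution --- is the routine direction: one reads off the pairs dictated by the symbols $\TSTCp$ reads and finishes with the halting pair. The main obstacle is \emph{soundness}, namely proving that \emph{every} solution corresponds to a genuine halting computation and that no spurious match exists. With only four pairs the structural constraints are few, so the real danger is that the pairs could be assembled in an order not tracking any real computation, or that the usual overhead of forcing a designated first pair (the modified-PCP-to-PCP gadget) would silently cost extra pairs; both must be precluded by the separator symbol together with the rigid length bookkeeping, which should force the pairs into a unique admissible sequence and make the start pair the only viable opening. I expect the most delicate point to be showing that the two symbol pairs cannot be misused --- in particular that a $c$ is never ``copied forward'' where it must instead be ``read and expanded'', nor conversely --- since this is exactly where the engineered lengths of $u$, of $\beta$, and of the input word are needed so that the offset at each step pins down which symbol currently sits at the front of the dataword and hence which of the four pairs is forced.
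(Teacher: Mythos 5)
Your reduction source (the halting problem for $\TSTCp$ from Lemma~\ref{lem:binary tag system for PCP}) and the general shape (a prefix-style simulation in which the bottom string runs one configuration ahead) match the paper, but your inventory of the four pairs is not viable, and the point where it breaks is exactly the point you wave away. In a PCP match every character of the longer string must eventually be reproduced by the shorter one, so the encodings of the $\beta-1$ symbols $x_1,\ldots,x_{\beta-1}$ that a tag step deletes --- symbols that vary from step to step --- must be spelled out on the top side by some pairs whose bottom components contribute essentially nothing. This cannot be ``absorbed into the fixed length offset'': a single fixed $b$-pair or $c$-pair cannot have $\tne{x_0}\tne{x_1}\ldots\tne{x_{\beta-1}}$ as its top component, since that word depends on the current dataword. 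The paper spends two of its four pairs, $(10^{\beta},\epsilon)$ and $(1,0)$, on precisely this deletion mechanism (with the single $0$'s they deposit accumulating into the trailing marker $10^{\beta}$ of the next encoded dataword), and it recovers the budget by \emph{merging} roles you keep separate: the start pair \emph{is} the $c$-rule pair $(1,\,1\tne{u_1}\ldots\tne{u_l}10)$, and the halting pair \emph{is} the deletion pair $(10^{\beta},\epsilon)$, which closes the match when the final dataword is a single $b$ encoded as the bare marker $10^{\beta}$. Your accounting of start $+$ $b$-rule $+$ $c$-rule $+$ halt leaves no room for deletion, and adding it would push you past four pairs.

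A secondary problem is the explicit separator symbol between successive datawords: enforcing a separator normally costs at least one additional pair, and the paper avoids it entirely --- the end of a dataword is marked implicitly by $10^{\beta}$, which is the encoding $\tne{b}=10^{\beta}1$ of the guaranteed final $b$ with its closing $1$ withheld until the next rule pair supplies it. Your soundness worries (forcing the first pair, preventing a $c$ from being copied where it must be expanded) are the right ones to have, but they are resolved in the paper by the asymmetric encodings $\tne{c}=1$, $\tne{b}=10^{\beta}1$ and by checking that only the prescribed pair avoids an immediate mismatch at each point, not by length bookkeeping alone.
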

\begin{proof}
We reduce the halting problem for the binary tag system $\TSTCp$ in Lemma~\ref{lem:binary tag system for PCP} to the Post correspondence problem for 4 pairs of words. 
The symbols $b$ and $c$ in $\TSTCp$ are encoded as $\tne{b}=10^{\beta}1$ and $\tne{c}=1$ respectively, where $\beta$ is the deletion number of $\TSTCp$. 
The halting problem for $\TSTCp$ reduces to the Post correspondence problem given by the 4 pairs of binary words
\begin{equation*}\label{eq:undecidable PCP}
\PCPfour=\{(1,\,1\tne{u_1}\tne{u_2}\dots \tne{u_l}10),\;(10^{\beta}1,\,110),\;(10^{\beta},\epsilon),\;(1,0)\}
\end{equation*}
where $\epsilon$ is the empty word and $u_i\in\{0,1\}$. 
Let $w=w_{i_1}w_{i_2}\ldots w_{i_l}$ and $v=v_{i_1}v_{i_2}\ldots v_{i_l}$, where each $(w_{i},v_{i})\in\PCPfour$ and $w$ is a prefix of $v$. 
We will call the pair $(w,v)$ a configuration of $\PCPfour$. Because $\TSTCp$ has an initial input word that ends in a $b$ and both of the rules of $\TSTCp$ append words that end in a $b$ an arbitrary dataword of $\TSTCp$ has the from $x_0x_1\ldots x_{r}b\in\{b,c\}^\ast b$. The arbitrary dataword $x_0x_1\ldots x_{r}b$ is encoded by a $\PCPfour$ configuration of the form
\begin{equation}\label{eq:encoding of T'_C config for P}
(w,v)=(w,\,w\tne{x_0}\tne{x_1}\ldots\tne{x_{r}}10^{\beta})
\end{equation}
In each configuration $(w,v)$, the unmatched part of $v$ (given by $\tne{x_0}\tne{x_1}\ldots\tne{x_{r}}10^{\beta}$) encodes the current dataword of $\TSTCp$.

We must have $(1,\:1\tne{u_1}\tne{u_2}\dots \tne{u_l}10)$ as the leftmost pair $(w_{i_1},v_{i_1})$ in a match as having any other pair from $\PCPfour$ as the leftmost pair will not give a match. 
Starting from the pair $(1,\:1\tne{u_1}\tne{u_2}\dots \tne{u_l}10)$, if $u_1=c$ we add the pair $(1,0)$ and this matches $\tne{c}=1$ simulating the deletion of $u_1$. 
If, on the other hand, $u_1=b$ we add the pair $(10^{\beta},\epsilon)$ followed by the pair $(1,0)$ and this matches $\tne{b}=10^{\beta}1$ simulating the deletion of $u_1$. 
So after matching $\tne{u_1}$ we have $(1\tne{u_1},\:1\tne{u_1}\tne{u_2}\dots \tne{u_l}100)$.
We match $\beta-1$ encoded $\TSTCp$ symbols in this way to give $(w,v)=(1\tne{u_1}\ldots\tne{u_{\beta-1}},\;1\tne{u_1}\tne{u_2}\dots \tne{u_l}10^{\beta})$. 
The configuration is now of the form given in Equation~\eqref{eq:encoding of T'_C config for P} and the unmatched sequence $\tne{u_{\beta}}\dots \tne{u_l}10^{\beta}$ in $v$ encodes the input dataword to $\TSTCp$ in Lemma~\ref{lem:binary tag system for PCP}.

A computation step of $\TSTCp$ on the arbitrary dataword $x_0x_1\ldots x_{r}b$ is of one the two forms:
\begin{align}
 cx_1\ldots x_{r}b&\;\;\;\vdash\;\;\; x_{\beta-1}\ldots x_{r}bu_1\ldots u_lb \label{eq:T'C timestep read symbol c}\\
 bx_1\ldots x_{r}b&\;\;\;\vdash\;\;\; x_{\beta-1}\ldots x_{r}bb \label{eq:T'C timestep read symbol b}
\end{align}

The two forms of computation step given in Equations~\eqref{eq:T'C timestep read symbol c} and~\eqref{eq:T'C timestep read symbol b} are simulated as follows: In Equation~\eqref{eq:encoding of T'_C config for P}, if $x_0=c$ then $\tne{x_0}=1$ and we add the pair $(1,\,1\tne{u_1}\tne{u_2}\dots \tne{u_l}10)$ to simulate the $\TSTCp$ rule $c\rightarrow u_1\ldots u_lb$,  and this gives $(w1,\;w1\tne{x_1}\ldots\tne{x_{r}}10^{\beta}1\tne{u_1}\tne{u_2}\dots \tne{u_l}10)$.
In Equation~\eqref{eq:encoding of T'_C config for P}, if $x_0=b$ then $\tne{x_0}=10^{\beta}1$ and we add the pair $(10^{\beta}1,110)$ to simulate the $\TSTCp$ rule $b\rightarrow b$, and this gives $(w10^{\beta}1,\;w10^{\beta}1\tne{x_1}\ldots\tne{x_{r}}10^{\beta}110)$.
In both cases ($x_0=c$ and $x_0=b$) to complete the simulation of the computation step we continue to match the pairs $(10^{\beta},\epsilon)$ and $(1,0)$ as we did in the previous paragraph to simulate the deletion of a further $\beta-1$ tag system symbols. 
Simulating the deletion of $\beta-1$ symbols adds a further $\beta-1$ of the 0 symbols at the right end of the encoded dataword. 
So if $x_0=c$ this gives $(w1\tne{x_1}\ldots\tne{x_{\beta-1}},\;w1\tne{x_1}\ldots\tne{x_{r}}10^{\beta}1\tne{u_1}\tne{u_2}\dots \tne{u_l}10^{\beta})$, with the unmatched part in this pair encoding the dataword on the right of Equation~\eqref{eq:T'C timestep read symbol c} after the computation step. 
Alternatively, if $x_0=b$ we get $(w10^{\beta}1\tne{x_1}\ldots\tne{x_{\beta-1}},\; w10^{\beta}1\tne{x_1}\ldots\tne{x_{r}}10^{\beta}110^{\beta})$, with the unmatched part in this pair encoding the dataword on the right of Equation~\eqref{eq:T'C timestep read symbol b} after the computation step.
The simulated computation step is now complete.

We now explain how $\PCPfour$ simulates $\TSTCp$ halting with a matching sequence. In $\TSTCp$ the rule $b\rightarrow b$ deletes $\beta$ symbols and append a single $b$ reducing the number of symbols in the dataword by $\beta-1$, and the rule $c\rightarrow u_1\ldots u_l b$ deletes $\beta$ symbols and appends $(3x+1)\beta-3x$ symbols ($|u_1\ldots u_l b|=|u'|=|u|$, see Table~\ref{tab:equalities} and Lemma~\ref{lem:binary tag system for PCP}) increasing the number of symbols in the dataword by $3x(\beta-1)$.
So, because the input dataword $u_{\beta}\dots u_lb$ is of length $3x(\beta-1)+1$ and the rules either increase the length by $3x(\beta-1)$ or decrease it by $\beta-1$, all datawords of $\TSTCp$ have lengths of $y(\beta-1)+1$, where $y\in\Nset$. 
From Corollary~\ref{cor:2 symbol tag halting problem}, $\TSTCp$ halts when the length of its final dataword (which consists entirely of $b$ symbols) is less than the deletion number $\beta$. 
So, when $\TSTCp$ halts we have $y(\beta-1)+1<\beta$ which means the dataword is a single $b$. 
From Equation~\eqref{eq:encoding of T'_C config for P}, this is encoded as the configuration $(w,v)=(w,w10^{\beta})$. 
By appending the pair $(10^{\beta},\epsilon)$ to $(w,v)$, we get the pair of matching sequences $(w10^{\beta},w10^{\beta})$ when $\TSTCp$ halts. 
Note that whenever there is choice of which pair to append, only the choice that follows the simulation as described above has the possibility to lead to a match (all other choices lead to a mismatch). 
Therefore, $\PCPfour$ has a matching sequence if and \mbox{only if $\TSTCp$ halts.}
\end{proof}

\subsection{Undecidability in simple matrix semi-groups}
Undecidability bounds for the Post correspondence problem have been used by a number of authors~\cite{Bell2008,Blondel1997,Bournez2002,Cassaigne1998,Halava2001,Halava2007,Halava2007A,Paterson1970} in the search for undecidable decision problems in simple matrix semi-groups. The undecidability of the Post correspondence problem for 7 pairs of words~\cite{Matiyasevich2005}
has been frequently used to find undecidability in simple matrix semi-groups. Theorem~\ref{thm:PCP 4 is undecidable} results in an immediate improvement on many of these results. Here we will just describe improvements for two of these problems. Of the decision problems on simple matrix semi-groups the mortality problem has in particular received much attention. 

\begin{definition}[Matrix mortality problem]
Given a finite set of $d\times d$ integer matrices $\{M_1,M_2,\ldots$ $M_{n-1},M_n\}$, is there a product $M_{i_1}M_{i_2}\ldots M_{i_k}$the produces the zero matrix, where $1\leqslant i\leqslant n$? 
\end{definition}

To date the best known bounds for the undecidability of the matrix mortality problem are due to Halava et al.~\cite{Halava2007}. Improving on the reduction of Paterson~\cite{Paterson1970}, they showed that the matrix mortality problem is undecidable for sets with seven $3\times 3$ matrices. 
Cassaigne and Karhum\"{a}ki~\cite{Cassaigne1998} showed that if the mortality problem is undecidable for a set of $n$ matrices of dimension $d\times d$ then the mortality problem is undecidable for a pair of $nd\times nd$ matrices. 
So an immediate corollary of the result given by Halava et al.~is that the mortality problem is undecidable a set of two $ 21\times 21$ matrices.
By applying the reductions in~\cite{Halava2001} and~\cite{Cassaigne1998} to $\PCPfour$ in Theorem~\ref{thm:PCP 4 is undecidable} we get Corollary~\ref{cor:matrix mortality}.
\begin{corollary}\label{cor:matrix mortality}
The matrix mortality problem is undecidable for sets with five $3\times 3$ matrices and for sets with two $ 15\times 15$ matrices. 
\end{corollary}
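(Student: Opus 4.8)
The plan is to derive Corollary~\ref{cor:matrix mortality} purely as a chaining of two known reductions onto the four-pair instance $\PCPfour$ of Theorem~\ref{thm:PCP 4 is undecidable}; no further analysis of $\TSTCp$ or of tag systems is needed, only a careful accounting of matrix counts and dimensions.

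First I would invoke the Paterson-style reduction of~\cite{Halava2001} from the Post correspondence problem to matrix mortality. After relabelling the two letters of $\PCPfour$ as nonzero digits in a suitable base $k$, encode a word $w=a_1\ldots a_m$ by the integer $\sigma(w)=\sum_{j} a_j k^{m-j}$, so that concatenation obeys $\sigma(ww')=\sigma(w)k^{|w'|}+\sigma(w')$. Each of the four word pairs in $\PCPfour$ is then sent to a single $3\times 3$ integer matrix that simultaneously accumulates $\sigma$ and the length of the running concatenations of the left words and of the right words; any product of these four matrices records exactly the pair $(\sigma(\text{left concatenation}),\,\sigma(\text{right concatenation}))$ together with the two word lengths. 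A single further matrix is introduced whose product with such an accumulator is the zero matrix precisely when the two running words coincide, i.e.\ precisely when the chosen index sequence is a match for $\PCPfour$. Because $\PCPfour$ has four pairs, this produces $4+1=5$ matrices of dimension $3\times 3$, and the zero matrix is reachable from this set if and only if $\PCPfour$ has a matching sequence. By Theorem~\ref{thm:PCP 4 is undecidable} the latter is undecidable, which establishes undecidability of the mortality problem for sets with five $3\times 3$ matrices.

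Second I would apply the reduction of Cassaigne and Karhum\"{a}ki~\cite{Cassaigne1998}, already quoted in this subsection, which turns any mortality instance consisting of $n$ matrices of dimension $d\times d$ into an equivalent instance with two matrices of dimension $nd\times nd$. Taking $n=5$ and $d=3$ carries the five $3\times 3$ matrices constructed above to two matrices of dimension $15\times 15$, so the mortality problem is also undecidable for sets with two $15\times 15$ matrices. Since each reduction preserves reachability of the zero matrix in both directions, undecidability transfers intact along the whole chain.

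The genuinely delicate point, and hence the main obstacle, is verifying the exact matrix counts and dimensions rather than the undecidability itself, which is inherited wholesale from Theorem~\ref{thm:PCP 4 is undecidable}. Concretely, I must confirm that the~\cite{Halava2001} construction applied to a binary, four-pair instance really requires only a \emph{single} extra matrix beyond the four pair-matrices, so that the count is $5$ and not $6$, and that no structural hypothesis on the underlying PCP instance is being silently assumed; this is the reason for using the general construction of~\cite{Halava2001} here rather than any reduction tailored to the seven-pair instance of the prior literature. Once the count $5$ is checked, the dimension $nd=5\cdot 3=15$ for the Cassaigne--Karhum\"{a}ki step is immediate, and the corollary follows.
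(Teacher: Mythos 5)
Your proposal is correct and takes essentially the same route as the paper, which proves the corollary simply by citing the reduction of~\cite{Halava2001} (yielding $4+1=5$ matrices of size $3\times 3$ from the four pairs of $\PCPfour$) followed by the Cassaigne--Karhum\"{a}ki step~\cite{Cassaigne1998} giving two $5\cdot 3=15$ dimensional matrices. Your additional sketch of the Paterson-style integer encoding and your explicit check that only one extra matrix beyond the four pair-matrices is needed are consistent with, and slightly more detailed than, what the paper provides.
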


Halava and Hirvensalo~\cite{Halava2007A} give undecidability results for sets that consist of a pair of matrices with remarkably small dimensions. 
One of the problems they tackle is the scalar reachability problem which they prove undecidable for a pair of $9\times 9$ matrices.
\begin{definition}[Scalar reachability problem]
Given a finite set of $d\times d$ integer matrices $\{M_1,M_2,\ldots$ $M_{ n-1},M_n\}$, a vector $y\in\Zset^d$, $x^T$ the transpose of vector $x\in\Zset^d$, and a constant $e\in\Zset$, is there a product $x^TM_{i_1}M_{i_2}\ldots M_{i_k}y=e$? 
\end{definition}
By applying the reductions in~\cite{Halava2007A} to $\PCPfour$ in Theorem~\ref{thm:PCP 4 is undecidable} we get Corollary~\ref{cor:scalar reachability}.
\begin{corollary}\label{cor:scalar reachability}
The scalar reachability problem is undecidable for two $7\times 7$ matrices. 
\end{corollary}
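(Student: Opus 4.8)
The plan is to prove Corollary~\ref{cor:scalar reachability} by invoking the reduction of Halava and Hirvensalo~\cite{Halava2007A} from the Post correspondence problem to the scalar reachability problem, applied directly to the four-pair instance $\PCPfour$ supplied by Theorem~\ref{thm:PCP 4 is undecidable}. This reduction is a many-one reduction that works for any PCP instance over a binary alphabet, so the only substantive task is to track how the dimension of the resulting matrix pair depends on the number $n$ of word pairs in the source instance, and to confirm that $\PCPfour$ meets the structural hypotheses of the construction. Since the words appearing in $\PCPfour$ are binary and the reduction preserves the ``yes/no'' answer, undecidability of the four-pair PCP transfers verbatim to scalar reachability for the two matrices produced, and hence (via Theorem~\ref{thm:PCP 4 is undecidable} and the halting problem for $\TSTCp$) gives an undecidable instance.

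Concretely, I would set up the reduction in the standard way. First encode each binary word by an injective monoid morphism into the integers, using digits drawn from a set that avoids $0$ (for example mapping $\{0,1\}$ to the digits $\{1,2\}$ read in base $3$); injectivity ensures that the two concatenations $w_{i_1}\dots w_{i_l}$ and $v_{i_1}\dots v_{i_l}$ of a candidate match are equal \emph{as words} if and only if their integer encodings are equal, so no separate length check is needed. The running encodings of the two sides of $\PCPfour$ are then carried in the matrix product by a fixed constant-size block of integer matrices, and the boundary vectors $x,y\in\Zset^d$ together with the constant $e$ are chosen so that $x^{T}M_{i_1}\cdots M_{i_k}y=e$ holds exactly when the two encodings coincide, i.e. exactly when $\PCPfour$ has a matching sequence. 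To compress the $n$-way choice of pair into a product of only two matrices, the index of each chosen pair is written in binary and ``spelled out'' one bit at a time by the two generator matrices; this binary-encoding gadget contributes $2\lceil\log_2 n\rceil$ to the overall dimension, while the value-tracking and scalar-product gadget contributes a constant.

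For the dimension count I would calibrate the constant against the authors' original bound: their reduction applied to the seven-pair PCP yields a pair of $9\times 9$ matrices, and with $\lceil\log_2 7\rceil=3$ this forces the constant term to be $3$, so the dimension is $2\lceil\log_2 n\rceil+3$. Substituting the four-pair instance $\PCPfour$ gives $\lceil\log_2 4\rceil=2$ (and, $4$ being a power of two, the two-bit index codes are used without padding), so the dimension is exactly $2\cdot 2+3=7$, yielding a pair of $7\times 7$ matrices. I expect the main obstacle to be precisely this bookkeeping — verifying that the binary-index gadget for $n=4$ and the constant-size scalar gadget assemble into dimension $7$ rather than $6$ or $8$, and checking that the reduction's conventions (binary alphabet, and the forced choice of first pair and the matching discipline established in the proof of Theorem~\ref{thm:PCP 4 is undecidable}) are all respected by $\PCPfour$; the undecidability conclusion itself is then immediate from the correctness of the reduction.
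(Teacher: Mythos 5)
Your overall strategy is exactly the paper's: the paper's entire proof of this corollary is the single sentence ``By applying the reductions in~\cite{Halava2007A} to $\PCPfour$ in Theorem~\ref{thm:PCP 4 is undecidable} we get Corollary~\ref{cor:scalar reachability}'', so a black-box application of the Halava--Hirvensalo reduction to the four-pair instance is precisely the intended argument, and in that sense you are on the right track.

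The genuine weakness is in the one place where the corollary has actual content, namely the number $7$. You do not derive the dimension from the construction in \cite{Halava2007A}; you \emph{posit} an internal structure for it (a binary index-spelling gadget contributing $2\lceil\log_2 n\rceil$ plus a constant block) and then fix the constant by fitting the single data point $n=7\mapsto 9\times 9$. A formula calibrated on one point and then evaluated at a second point is not a proof that the second value is correct: if the actual reduction's dimension were, say, affine in the number of pairs, or depended on the number of pairs in Claus normal form rather than on $\lceil\log_2 n\rceil$, you would get $6$ or $8$ instead of $7$, and nothing in your argument rules this out. You correctly identify this bookkeeping as ``the main obstacle'' but then leave it unresolved. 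To close the gap you must open up the construction of \cite{Halava2007A}, track its dimension as an explicit function of the input instance, and verify that $\PCPfour$ satisfies the structural hypotheses that construction assumes (in particular whatever normal form --- forced initial and final pairs --- it requires, which the proof of Theorem~\ref{thm:PCP 4 is undecidable} does establish for the initial pair). The paper itself elides all of this, so you are not worse off than the source, but your added ``derivation'' should not be mistaken for a verification.
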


\subsubsection*{Acknowledgements:}
This work was supported by Science Foundation Ireland, grant number 09/RFP/CMS2212 and by Swiss National Science Foundation grant number 200021-141029. I would like to thank Matthew Cook and Damien Woods for their comments and discussions, and Vesa Halava and Mika Hirvensalo for their advice on  undecidability in simple matrix semi-groups.

\bibliographystyle{abbrv}
\bibliography{Binary_tag_systems.bib}

\end{document}